\def\marginpar#1{\ignorespaces}
\newtheorem{theorem}{Theorem}[section]
\newtheorem{lemma}[theorem]{Lemma}
\newtheorem{proposition}[theorem]{Proposition}
\newtheorem{definition}[theorem]{Definition}
\newtheorem{assump}[theorem]{Assumption}
\numberwithin{equation}{section}
\begin{document}
\title[Transaction fee mechanism for PoS]{Transaction Fee Mechanism for \\ Proof-of-Stake Protocol}

\author[Wenpin Tang]{{Wenpin} Tang}
\address{Department of Industrial Engineer and Operations Research, Columbia University. 
} \email{wt2319@columbia.edu}

\author[David Yao]{David D.\ Yao}
\address{Department of Industrial Engineer and Operations Research, Columbia University. 
} \email{yao@columbia.edu}

\date{\today} 
\begin{abstract}
We study a mechanism design problem in the blockchain proof-of-stake (PoS) protocol.
Our main objective is to extend the transaction fee mechanism (TFM) recently
proposed in \cite{CS23}, so as to incorporate
a {\it long-run} utility model for the miner into  
the burning second-price auction mechanism $\texttt{BSP}(\gamma)$ proposed in \cite{CS23}
(where $\gamma$ is a key parameter in the strict $\gamma$-utility model that is applied to both miners and users).
First, we derive an explicit functional form for the long-run utility of the miner using a martingale 
approach, and reveal a critical discontinuity of the utility function, namely a small deviation from being truthful
will yield a discrete jump (up or down) in the miner's utility.
We show that because of this discontinuity the $\texttt{BSP}(\gamma)$ mechanism will
fail a key desired property in TFM,  $c$-side contract proofness ($c$-SCP). 
As a remedy, we introduce another parameter $\theta$, and propose
a new $\texttt{BSP}(\theta)$ mechanism, and prove that it
satisfies all three desired properties  of TFM: user- and miner-incentive compatibility (UIC and MIC) as well as $c$-SCP,
 provided the parameter $\theta$ falls into
a specific range, along with a proper ``tick'' size imposed on user bids.
\end{abstract}

\maketitle

\textit{Key words}: Blockchain, proof of stake, transaction fee mechanism, cryptocurrency, incentive compatibility, utility, martingale.


\section{Introduction}

\quad 
A blockchain is a digital ledger that facilitates the secure exchange and execution of transactions 
in a distributed network without an intermediary, 
hence achieving {\em decentralization}.
The past decade has witnessed impressive advances of the 
blockchain technology in a wide range of applications including
cryptocurrency \cite{Naka08, Wood14},
healthcare \cite{MC19, TPE20},
supply chain \cite{CTT20, SK19},
non-fungible tokens \cite{Dow22, WL21}, 
and (more recently) crypto exchanges for the stock market \cite{MN23}, among many others.

\quad 
There are two primary parties in a blockchain, the users and the miners. 
Below is a brief highlight of the two parties' activities and interactions.
\begin{itemize}[itemsep = 3 pt]
\item
The users, who submit transactions to the blockchain for processing, seek to have their transactions 
 settled and published on the blockchain network in a timely fashion by the miners. 
Since each block has a limited size or capacity, 
most blockchains adopt an auction mechanism that requires the users to submit bids 
to have their transactions processed by the miners. 
In this regard, the blockchain {transaction} is similar to an auction system, with the miners acting like the auctioneer, and  
the users as bidders.
\item
The miners select a subset of transactions from the mempool (according to the bids),
include them in a block, and then position the block into an ever-growing public ledger.
At the core of this ``mining'' process is a {\em consensus protocol}, 
a set of rules governing the whole process, including both the selection of the miner and 
the detailed mechanics {(e.g., the ``longest chain")} to form the public ledger. 
The most popular protocols are {\em Proof of Work} (PoW) \cite{Naka08} and {\em Proof of Stake} (PoS) \cite{KN12, Wood14}.
In both protocols, the miners compete with each other (to be selected to do the work) by
either solving a hashing puzzle (PoW) or bidding with their stakes/coins (PoS).
The winner will be selected to attach a new block (i.e. mine the block) to the blockchain.
The selected miner will then receive transaction fees from the users,
along with a separate reward from the blockchain.
\end{itemize}
See Figure \ref{fig:0} for an illustration of the miner-user activities under the PoS protocol, which is the focus of this study.
\begin{figure}[h]
\centering
\includegraphics[width=0.74\columnwidth]{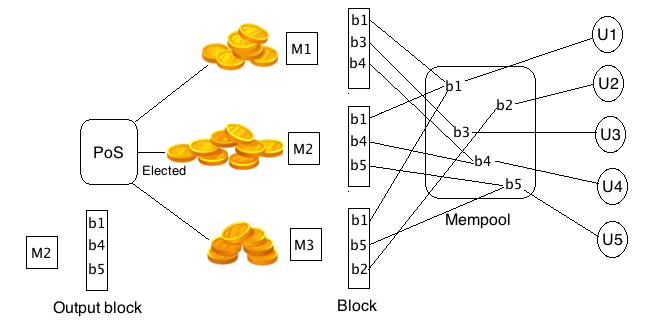}
\caption{Miner-user activities under the PoS protocol.}
\label{fig:0}
\end{figure}

\quad A central design issue for the PoS protocol is to come up with a reliable and efficient {\em transaction fee mechanism} (TFM)
so as to incentivize all participants, users and miners, {to} act honestly or truthfully. 
Most work in the general area of mechanism design (see \cite{Kris02, Myer81}) 
assumes that the auctioneer will honestly follow the prescribed mechanism,
so the only concern is the strategic behavior of bidders. 
In blockchains, however,
the miners (similar to the auctioneer) may also deviate from being truthful
based on ex-post information.
For instance, a miner may inject fake transactions so as to collect a larger payment, or collude with some users
(possibly after examining all the bids) so as 
to improve their joint utility.
What's more, \cite{BEOS19} showed that 
the Vickrey auction \cite{Vic61}, 
known to be user incentive compatible,
fails to be incentive compatible for the miners.

\quad To address the above problems, 
\cite{CS23, LSZ22, Rou20} proposed three desired properties 
for a transaction fee mechanism:
\begin{itemize}[itemsep = 3 pt]
\item
{\em User incentive compatibility} (UIC):
the users bid truthfully given that the miners implement the mechanism honestly.
\item
{\em Miner incentive compatibility} (MIC):
the miners follow the prescribed mechanism honestly. 
\item
{\em Miner-user side contract proofness} ($c$-SCP): 
any collusion between the miner and up to $c$ users cannot improve their joint utility 
by deviating from being truthful. 
\end{itemize}
Refer to Section \ref{sc33} for more details. 
See also \cite{AL20, CSZZ22, FW20, WSC23} for discussions
on (how to circumvent) the miner or auctioneer's strategic behaviors.
Recently, \cite{DG20} proposed the notion of {\em maximal extractable value} (MEV) 
to formally quantify and analyze the strategic behaviors in the blockchain generation, 
along with {the} so-called {\em proposer-builder separation} (PBS) \cite{BV21}
as an early attempt to reduce the miner's MEV.

\quad In 2019, Ethereum proposed a {TFM} EIP-1559 \cite{BCD19}, which
achieves all three properties above, as shown in  \cite{Rou20}, under the condition that 
there is no congestion, i.e. the block size is unlimited, while leaving open (unanswered) the congestion case.
A recent breakthrough \cite{CS23} provides a negative answer to this: 
when the block size is limited, as it is in practice, 
no {TFM} can achieve UIC, MIC and $c$-SCP simultaneously
under the so-called {\em current utility}, i.e., the utility applies only to the immediate gain/loss. 
Moreover, UIC and $c$-SCP implies that the miner's revenue must be zero.
Note, however, that in a less restrictive setting (“discrete” bids and unlimited block size), 
\cite{GY22} provides a TFM that can achieve all the three properties, with a positive miner's revenue under the current utility.
Specifically, a miner or a user may submit untruthful or fake bids, which may not be included in
the current block (and hence cost nothing) but, once submitted and registered in the blockchain
may still be included in a (possibly near) future block. This strategic behavior is not accounted for 
in the current utility model.


\quad As a remedy,  \cite{CS23} proposes a {\em strict $\gamma$-utility},
where the parameter $\gamma>0$ (strictly positive) serves as a discount factor that 
will be applied to any unconfirmed bids. 
Based on this, \cite{CS23} also proposes a randomized {TFM},
called the {\em burning second-price auction}, and shows it 
satisfies all the three properties under the strict $\gamma$-utility.
The essence of the strict $\gamma$-utility is to
account for the future cost of any strategic gain, and thus
mitigating the myopia  in the current utility (via neglecting the future cost). 

\quad The objective of our work here is to further generalize this idea,
by taking into account not only the future cost but also the difference between a miner's
perspective of the ``future" and that of a user's, particularly in terms of {\it time scale}. 
Any user, by definition a minor player in the blockchain system,
may only use the blockchain occasionally, to request the execution of a transaction (e.g., a buy/sell order). 
So it is reasonable to assume that the users primarily focus on the current or near-term interests in terms of 
 their utilities. 
In contrast, the miners are the major players performing the dual role of verifying the transactions
and updating the blockchain 
(refer to Figure \ref{fig:0}). 
Thus, a miner's utility is long-term based; and this is even more so in the PoS protocol, where
the probability for a miner to be selected to mine a new block 
(and earn the associated transaction payment and blockchain reward)
is proportional to the number of stakes that the miner can afford to put forth to bid
{(see \cite{RS21, Tang22, TY23}, and \cite{Tang23} for a review on the PoS wealth evolution.)}
Thus, the more profit (or stakes) a miner can accumulate over time, 
the more likely is the miner's chance to win the bid to mine the next block 
and thereby make yet more profit.  
In this regard, the miner's time scale, in terms of formulating a proper miner's utility model,
 should be very different from that of a typical user's. 

\quad In this paper, our main objective is to extend the TFM of \cite{CS23}, so as to incorporate
a long-run utility model for the miner into  
the burning second-price auction mechanism of \cite{CS23}, denoted $\texttt{BSP}(\gamma)$, where
$\gamma$ is the parameter (discount factor) in the strict $\gamma$-utility model.
In particular, we bring out a critical feature of such a long-run utility model (for the miner), a discontinuity  
in the sense that a small deviation from being truthful
will yield a discrete jump, either up or down, in the miner's utility (Theorem \ref{thm:Ump}).
This discontinuity calls for a substantial modification of the $\texttt{BSP}(\gamma)$ mechanism.
Specifically, we show the $\texttt{BSP}(\gamma)$ mechanism
cannot achieve $c$-SCP due to the discontinuity of the miner's utility (Proposition \ref{prop:cscp}).

\quad To overcome this handicap, we need to introduce another parameter $\theta$, which characterizes the randomized
confirmation rule (which selects a subset of top bids to be confirmed for inclusion into a new block). In addition, 
a minimal ``tick'' size $\Delta$ needs to be imposed on every bid submitted by the users, i.e., a bid can only be multiples of $\Delta$,
which bears similarity with the ``discrete" bids in \cite{GY22}.
Based on these two devices, we propose a new $\texttt{BSP}(\theta)$ mechanism, and prove that it
satisfies all three desired properties, UIC, MIC and $c$-SCP, under the PoS protocol, provided the parameter $\theta$ falls into
a specific range, along with a proper tick size $\Delta$.
Refer to details in Theorem \ref{thm:main}.

\quad The rest of the paper is organized as follows. 
In Section \ref{sc3}, we provide background and preliminary materials on the TFM 
in the context of the PoS protocol, define the utility functions and discuss the relevant strategic behaviors. 
In particular, in Section \ref{sc2}, we define the long-run utility for the miner mentioned above, explicitly derive its functional form, 
and reveal a crucial discontinuity in the miner's revenue (i.e., payment collected from the users whose bids are confirmed).
In Section \ref{sc33}, 
we propose a new burning second-price auction mechanism, denoted $\texttt{BSP}(\theta)$ as highlighted above.
We then show 
in Section \ref{sc4}, {that} the $\texttt{BSP}(\theta)$, along
with a minimum tick size and a proper parametric range for $\theta$, achieves
all three properties, UIC, MIC and $c$-SCP, for the PoS protocol.
We conclude with Section \ref{sc5}.

\section{Transaction fee mechanism, strategy and utility}
\label{sc3}


\quad In a transaction fee mechanism, 
a miner acts like an auctioneer, while users will bid 
to have their transactions included and confirmed in a block
and published in the blockchain. 

\quad Let $B$ be the number of slots in a block (i.e., the block size),
and assume without loss of generality that there are more bids than slots. 
The mechanism operates under the following rules:
\begin{itemize}[itemsep = 3 pt]
\item
An {\em inclusion rule} (executed by the miner) that decides 
which of the bids to include in the block. Only included bids can be accessed by the blockchain.
\item
A {\em confirmation rule} (executed by the blockchain) that 
 selects a subset of the included bids to confirm. 
Only confirmed bids are considered final, i.e., settled transactions.
\item
A {\em payment rule} (executed by the blockchain) that 
 specifies how much each confirmed bid should pay. 
\item
A {\em (miner) revenue rule} (executed by the blockchain) that 
specifies how much the miner should be paid. 
\end{itemize}

\quad There are several important facts to keep in mind.
To start with, only the first rule above, the inclusion rule, involves human decisions -- 
from the users (how much to bid) and the miners (which bids to include);
the other three rules concerning confirmation, user payment and miner revenue are all 
hard-coded into and executed by the blockchain protocol. 
The specification of these rules (all four) constitutes the main task of the mechanism design.

\quad Notably, however, since the inclusion rule interfaces with human decisions, the design of all
four rules need to anticipate and account for possible strategic behaviors of both the users and the miners.
Thus, in addition to the inclusion rule (executed by the miners), there's the confirmation rule (executed by the system),
i.e., not all bids included by a miner will be confirmed.
The other two rules specify how much a confirmed bid should pay, which need not be the same as what the user originally bids   
(e.g., similar to the second-price auction mechanism); and how much the miner should receive, which need not be the sum total of the payments 
from all confirmed bids.  
More detailed discussions are provided below. 


\subsection{Strategic behaviors and utilities}
\label{sc31}

As mentioned above, a good mechanism design is supposed to steer all participants  
 away from strategic (i.e., dishonest) behavior, so that they stay with their true (i.e., honest) values.
There are five sources of strategic behavior in the PoS protocol: those originate from
(i) a user,
(ii) a miner,
(iii) miner-user collusion,
(iv) user-user collusion;
and (v) miner-miner collusion.

\quad The impact of (iv), user-user collusion, is negligible, since the users are minor players in a blockchain network, and
it is difficult for them to collude.
So, below we shall ignore (iv).
As to (v), observe that under the PoS protocol, 
a miner's strategic decision will depend a priori on the number of stakes the miner owns, which determines the miner's chance to 
be selected to process the transaction.  
Thus, a miner-miner collusion can be reduced to the strategic behavior of a single representative miner,
who owns all the stakes (i.e., of the entire collusion group). 
So, below we shall implicitly assume there's a single (``super'') miner in the system.

\quad 
Specifically, we will consider the following possible strategic behaviors of the miner and the users:
\begin{itemize}[itemsep = 3 pt]
\item
{\em Strategically forming the inclusion list}.
A strategic miner or a miner-user collusion may not follow the inclusion rule,
as long as the included bids satisfy the block validity rules enforced by the blockchain.
\item
{\em Injecting fake transactions}.
The miner, users or a miner-user collusion may inject fake transactions, 
possibly after examining other users' bids in the mempool.
\item
{\em Bidding untruthfully}. 
The users or a miner-user collusion can bid untruthfully, possibly after examining other users' bids in the mempool.
\end{itemize}
Specifically, we shall adopt an {\em ex-post} auction (see \cite{Riley88})
in the transaction fee mechanism detailed below.
To do so, we need to first specify the utilities of the miner and the users. 

\begin{definition}[Strict $\gamma$-utility for the PoS]
\label{def:gamut}
For a bid $b$ (real or fake), denote by $v$ and $p$ its true value and required payment.
Let $\gamma\in (0,1]$ be a (given) parameter.
\begin{itemize}[itemsep = 3 pt]
\item
The strict $\gamma$-utility of the user is defined as the expected value of 
\begin{equation}
\label{eq:gammauser}
\sum_{b {\tiny \mbox{ confirmed}}} (v- p) - \sum_{b {\tiny \mbox{ unconfirmed}}, \, b > v} \gamma (b-v).
\end{equation}
\item
The strict $\gamma$-{\bf return} of the miner, denoted by $\mathcal{R}_\gamma$, 
is defined as the expected value of
\begin{equation}
\label{eq:gammaminer1}
\mbox{miner's revenue} + \sum_{b {\tiny \mbox{ confirmed}}} ({0}- p) - \sum_{b {\tiny \mbox{ unconfirmed}}, \, b > v} \gamma (b-{0}),
\end{equation}
{where a fake bid has value $v = 0$.}
The strict $\gamma$-utility of the miner is 
\begin{equation}
\label{eq:gammaminer2}
\mathcal{U}_m(\mathcal{R}_\gamma), \quad \mbox{where } \mathcal{U}_m(\cdot) \mbox{ is given by \eqref{eq:Um} below}. 
\end{equation}
\end{itemize}
We shall simply refer to \eqref{eq:gammauser} as the user's utility,
and to \eqref{eq:gammaminer1}--\eqref{eq:gammaminer2} as the miner's {\it return} and utility,
i.e., implicitly assume a positive (``strict'') $\gamma >0$. 
\end{definition}

\quad Several remarks are in order.
An honest user's bid satisfies $b=v$, whereas $b\neq v$ corresponds to a strategic user.
A confirmed bid returns to the user
 its true value $v$ (even if $b\neq v$), for which the 
user will pay $p$ (to be specified by the mechanism below). This explains the first summation in \eqref{eq:gammauser}.
Moreover, as mentioned above, 
if a user's bid is not confirmed in the current block,
it is still recorded in the blockchain 
and may very well be confirmed in a future (often near-term) block.
The second summation in \eqref{eq:gammauser} accounts for the cost/penalty
 the strategic user will pay for overbidding ($b>v$). 
 (There's no penalty for underbidding.)



\quad The miner's total return (or net profit) $\mathcal{R}_\gamma$ is specified in the expression in 
\eqref{eq:gammaminer1}, where the first term is the revenue (the miner's receivable, to be specified by the mechanism below);
while the other two terms are the same as the user's return (utility) in  \eqref{eq:gammauser}, which allows for modeling
the miner's possible strategic behavior of injecting fake transactions into the system.  

\quad As to the miner's utility function $\mathcal{U}_m(\mathcal{R}_\gamma)$, the modeling details will be specified in 
the next subsection. 
\subsection{Miner's long-run utility}
\label{sc2}

To properly model the miner's utility, one must recognize and acknowledge that the miner and the 
users differ fundamentally in their {\it time scales}. 
In the context of capturing their strategic behaviors,
users are often more myopic and focus on their short-term gains;
while the miner is concerned with long-term returns, particularly so once we aggregate all the miners into
a single representative miner.
Recall, under the PoS protocol, in addition to the return/profit the miner earns for processing the 
transactions, the miner can also include the return as part of the stake used to earn rewards (in the form of stakes) associated with
validating a block (and handed out separately by the protocol).  
Hence, the more profit the miner receives in each round of processing transactions, 
the more likely the miner will be selected to validate a new block in the future,
which in turn will generate more profit. 
This {\it ``more leads to more''} incentive is simply not present in a user's bidding process.

\quad 
As mentioned earlier, 
the miner's revenue comes from two sources: 
payment from processing the user transactions and the reward from the blockchain.
To earn both, the miner needs to {be} selected by the PoS protocol to mine the block; if not 
the miner gets nothing.
Assume the users' payment $p$ is a constant, which is consistent with our focus on formulating a 
long-run utility for the miner.  
(Not to add, in practice a stationary flow of transactions over a long period 
is expected in a healthy payment ecosystem.)
For $t \ge 1$, let $R_t$ be the number of stakes handed out as reward by the PoS protocol at time $t$.
Let $(M_t, \, t \ge 0)$ be the number of stakes owned by the miner; hence, 
\begin{equation}
\label{eq:Mt}
M_{t} = M_{t-1} + (p +R_{t}) \,1_{\tiny \{ \mbox{the miner is selected at time } t \}}.
\end{equation}
It is important to note that here the payment $p$ may or may not be honest, which allows for 
modeling the miner's strategic behavior.

\quad Now, if the miner is not selected to mine a block, then some other miner will be selected to do so. 
The latter will receive a payment $p_h$, an honest one. This way, the other miner is used as a reference point to 
pin down any possible strategic gain or loss in the utility of the miner we are focusing on.
Let $(N_t, \, t \ge 0)$ be the total number of stakes owned by all the miners. 
We have
\begin{equation}
\label{eq:Nt}
N_{t} = N_{t-1} + (p + R_t) \,1_{\tiny \{ \mbox{the miner is selected at time } t \}}
+  (p_h + R_t) \,1_{\tiny \{ \mbox{the miner not selected at time } t \}},
\end{equation}
where the probability that the miner is selected at time $t$ given the past is
$M_{t-1}/N_{t-1}$.
Denote by $\{\mathcal{F}_t\}_{t \ge 0}$ the filtration generated by the process $(M_t, N_t)$.
Combining \eqref{eq:Mt} and \eqref{eq:Nt}, 
we obtain the dynamics of $(M_t, N_t)$:
\begin{equation}
\label{eq:MNgen}
(M_{t}, N_{t}) \,|\, \mathcal{F}_{t-1} = \left\{ \begin{array}{lcl}
(M_{t-1} + p + R_t, N_{t-1} + p + R_t) & \mbox{with probability}
& M_{t-1}/N_{t-1}, \\
(M_{t-1}, N_{t-1} +p_h + R_t) & \mbox{with probability} & 1-M_{t-1}/N_{t-1}.
\end{array}\right.
\end{equation}
Define the miner's long-run utility by 
\begin{equation}
\label{eq:Um}
\mathcal{U}_m (p): = \liminf_{t \to \infty} \frac{\mathbb{E}M_t}{t}.
\end{equation}
It is worth mentioning that
long-run payoffs (though different) were also used in  \cite{KKKT16, SSZ17} to analyze strategic behaviors in {PoW} blockchain mining. 

\quad To simplify the presentation, below we shall 
assume that the reward $R_t = R \ge 0$ is constant throughout. 
The next theorem derives the miner's long-run utility $\mathcal{U}_m(p)$ using a martingale approach. 
\begin{theorem}
\label{thm:Ump}
Let $\pi_0:= M_0/N_0$ be the miner's initial share,
and $\mathcal{U}_m(\cdot)$ be defined by \eqref{eq:Um}.
Assume that $0 < \pi_0 < 1$.
Then
\begin{equation}
\label{eq:Umformula}
\mathcal{U}_m(p) = 
 \left\{ \begin{array}{lcl}
\pi_0 (p_h + R) & \mbox{for} & p = p_h, \\ 
p +R & \mbox{for} & p > p_h, \\
0 & \mbox{for} & p < p_h.
\end{array}\right.
\end{equation}
\end{theorem}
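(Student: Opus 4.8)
The plan is to reduce the whole computation to the asymptotics of the miner's share $\pi_t := M_t/N_t$. First I would condition on $\mathcal{F}_{t-1}$ in \eqref{eq:MNgen} to get $\mathbb{E}[M_t\mid \mathcal{F}_{t-1}] = M_{t-1} + (p+R)\pi_{t-1}$; taking expectations and telescoping yields
\begin{equation*}
\frac{\mathbb{E}M_t}{t} = \frac{M_0}{t} + (p+R)\,\frac{1}{t}\sum_{s=0}^{t-1}\mathbb{E}\pi_s .
\end{equation*}
Thus $\mathcal{U}_m(p)$ equals $(p+R)$ times the Ces\`aro limit of $\mathbb{E}\pi_s$, and the entire problem becomes identifying $\lim_{s}\mathbb{E}\pi_s$ (once this limit exists, its Ces\`aro average coincides with it, and the $M_0/t$ term vanishes).

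The case $p=p_h$ is the cleanest. Then both branches of \eqref{eq:MNgen} increase $N$ by the same constant $p_h+R$, so $N_t$ is deterministic and $(M_t,N_t)$ is a classical P\'olya urn. A direct computation shows $\pi_t$ is then a bounded martingale, whence $\mathbb{E}\pi_t = \pi_0$ for every $t$ and $\mathcal{U}_m(p_h) = \pi_0(p_h+R)$, the first line of \eqref{eq:Umformula}.

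For $p\neq p_h$ I would first compute the one-step drift of $\pi_t$ and find it proportional to $(a-b)\,\pi_{t-1}(1-\pi_{t-1})/N_{t-1}$ with $a:=p+R$ and $b:=p_h+R$; hence $\pi_t$ is a bounded submartingale when $p>p_h$ and a bounded supermartingale when $p<p_h$, and in either case converges a.s. to some $\pi_\infty\in[0,1]$. A Doob-decomposition argument (the compensator is integrable because $\pi_t$ is bounded) forces $\sum_s \pi_{s-1}(1-\pi_{s-1})/s<\infty$ a.s., and since $\sum_s 1/s=\infty$ this pins $\pi_\infty\in\{0,1\}$ a.s. The hard part is to decide \emph{which} endpoint is attained, since the (sub/super)martingale property alone only gives $\mathbb{E}\pi_\infty\ge \pi_0$ (resp.\ $\le\pi_0$) and leaves open a nondegenerate law on $\{0,1\}$. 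To settle this I would pass to the continuous-time Athreya--Karlin embedding: run two independent pure-birth (Yule-type Markov branching) processes $X,Y$ that add $a$ to $X$ at rate $X$ and $b$ to $Y$ at rate $Y$, started at $M_0$ and $N_0-M_0$, so that the embedded jump chain is exactly \eqref{eq:MNgen}. Each process has a positive a.s.\ martingale limit, $e^{-as}X(s)\to W_1>0$ and $e^{-bs}Y(s)\to W_2>0$ (Kesten--Stigum, the offspring law being bounded), so $X\sim W_1e^{as}$ and $Y\sim W_2e^{bs}$; comparing exponents gives $\pi_\infty=\lim_s X/(X+Y)=1$ when $a>b$ and $=0$ when $a<b$.

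Finally, bounded convergence turns the a.s.\ limits into $\mathbb{E}\pi_t\to 1$ (resp.\ $0$), so the Ces\`aro average has the same limit and $\mathcal{U}_m(p)=(p+R)\cdot 1=p+R$ for $p>p_h$ while $\mathcal{U}_m(p)=(p+R)\cdot 0=0$ for $p<p_h$, matching the remaining two lines of \eqref{eq:Umformula}. I expect the genuine obstacle to be exactly this limit-identification step, i.e.\ ruling out the ``wrong'' absorbing boundary: the martingale/Doob part is routine and yields only $\pi_\infty\in\{0,1\}$, whereas the real work is the positivity $W_1,W_2>0$ of the branching-process limits in the embedding. (Equivalently, one could try to show directly that $D_t^{a}/M_t^{b}$, with $D_t:=N_t-M_t$, converges to a finite positive limit, which again forces $\pi_t\to1$; but verifying the supermartingale property for non-integer powers looks more delicate than the embedding.)
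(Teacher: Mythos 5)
Your proof is correct, but for the strategic case $p \neq p_h$ it follows a genuinely different route from the paper's. The honest case is identical in both (deterministic $N_t$, P\'olya-urn martingale $M_t/N_t$). For $p \neq p_h$, the paper never works with the share process $\pi_t$: it exploits the linear conservation law $M_t + \frac{G}{G_h-G}N_t + \frac{GG_h}{G-G_h}t = c$ (with $G = p+R$, $G_h = p_h+R$) to reduce everything to $\lim_t \mathbb{E}N_t/t$, manufactures the bounded submartingale $I_t = \bigl(L_t + c(G-G_h)/G\bigr)\big/\bigl(t - \tfrac{G-G_h}{GG_h}c\bigr)$ with $L_t = N_t - G_h t$, and identifies its limit through the algebraic fixed-point identity $\mathbb{E}I_\infty = \mathbb{E}\bigl[G I_\infty/(I_\infty+G_h)\bigr]$, which forces $I_\infty = G-G_h$ once one knows $0 < I_\infty \le G-G_h$. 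You instead make $\pi_t$ itself the object: your drift computation (which is exact up to the harmless replacement of $N_{t-1}$ by $(N_{t-1}+a)(N_{t-1}+b)/N_{t-1}$) gives a bounded sub/supermartingale, compensator summability pins $\pi_\infty \in \{0,1\}$, and the Athreya--Karlin embedding into two independent linear pure-birth processes selects the endpoint by comparing the exponents $a$ and $b$. Your route has two virtues: it makes the winner-take-all discontinuity in \eqref{eq:Umformula} conceptually transparent, and it treats rigorously exactly the point you flag as the crux --- ruling out the wrong absorbing endpoint --- which in the paper's proof rests on the rather brisk assertion that $I_\infty > 0$ a.s.\ (positivity is what lets the fixed-point equation exclude $I_\infty = 0$). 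The positivity $W_1, W_2 > 0$ you need is indeed classical for linear pure-birth processes (the centred jump times $T_k - a^{-1}\log k$ converge a.s., so $e^{-as}X(s)$ has an a.s.\ positive limit), though attributing it to Kesten--Stigum is a slight misnomer, as no $x\log x$ issue arises for bounded jumps. What the paper's route buys in exchange is that it stays entirely within elementary discrete-time martingale theory --- no continuous-time embedding --- and it delivers the $L^1$ limits $\mathbb{E}N_t/t \to G$ (resp.\ $G_h$) directly, which is all that the definition \eqref{eq:Um} requires.
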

Refer to Figure \ref{fig:1} for a plot of $\mathcal{U}_m(p)$.

\begin{figure}[h]
\centering
\includegraphics[width=0.55\columnwidth]{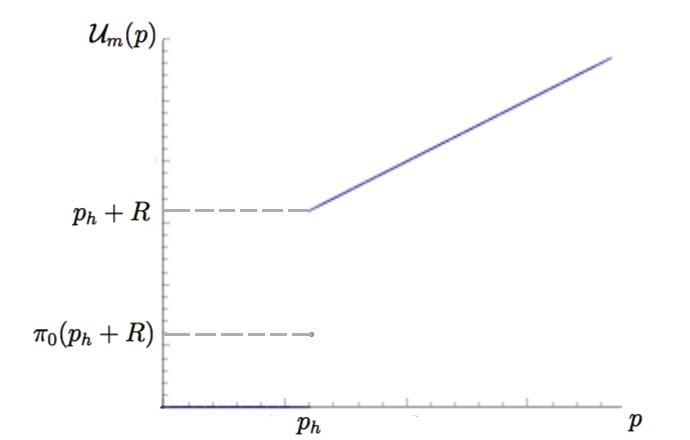}
\caption{Plot of $\mathcal{U}_m(\cdot)$.}
\label{fig:1}
\end{figure}

\begin{proof}
Denote $G:= p+ R$ and $G_h:= p_h + R$,
so the dynamics in \eqref{eq:MNgen} can be expressed as
\begin{equation}
\label{eq:MNcons}
(M_{t+1}, N_{t+1}) \,|\, \mathcal{F}_{t} = \left\{ \begin{array}{lcl}
(M_{t} + G, N_{t} + G) & \mbox{with probability}
& M_{t}/N_{t}, \\
(M_{t}, N_{t} + G_h) & \mbox{with probability} & 1-M_{t}/N_{t}.
\end{array}\right.
\end{equation}
We distinguish two cases: $p = p_h$ and $p \ne p_h$.

Case 1: $p = p_h$ (i.e., the miner is honest).
In this case,
\begin{equation}
\label{eq:M1}
M_{t+1} \,|\, \mathcal{F}_{t}= \left\{ \begin{array}{lcl}
M_{t} + G_h & \mbox{with probability}
& M_{t}/N_{t}, \\
M_{t} & \mbox{with probability} & 1 - M_{t}/N_{t}.
\end{array}\right.
\end{equation}
where $N_t = N_0 + G_h t$ is non-random.
It is easy to see from \eqref{eq:M1} that 
\begin{equation*}
\mathbb{E}M_{t+1} = \left( 1 + \frac{G_h}{N_{t}}\right)  \mathbb{E}M_{t}
= \frac{N_{t+1}}{N_{t}} \mathbb{E}M_{t},
\end{equation*}
which yields
$\mathbb{E}M_t = \pi_0 N_t$ for $t \ge 0$.
(In fact, $(M_t/N_t, \, t \ge 0)$ is a martingale.)
As a result, 
\begin{equation*}
\frac{\mathbb{E}M_t}{t} = \pi_0 \frac{N_t}{t} \longrightarrow \pi_0 G_h, \quad \mbox{as } t \to \infty.
\end{equation*}

Case 2: $p \ne p_h$ (i.e., the miner is strategic).
Set $c: = M_0 +  \frac{G}{G_h - G}N_0$.
It is straightforward from \eqref{eq:MNcons} that
\begin{equation*}
M_{t+1} + \frac{G}{G_h - G} N_{t+1} =  M_{t} + \frac{G}{G_h - G} N_{t} + \frac{GG_h}{G_h - G},
\end{equation*}
 so
 \begin{equation}
 \label{eq:MNlink}
 M_t + \frac{G}{G_h - G} N_t + \frac{GG_h}{G - G_h} t = c,
 \end{equation}
 and
\begin{equation}
\label{eq:M2}
\mathcal{U}_m(p) = \frac{G}{G-G_h} \liminf_{t \to \infty} \frac{\mathbb{E} N_t}{t} + \frac{GG_h}{G_h-G}.
 \end{equation}
By \eqref{eq:MNcons} and \eqref{eq:MNlink}, we obtain the dynamics of $N_t$:
 \begin{equation*}
N_{t+1} = \left\{ \begin{array}{lcl}
N_t + G & \mbox{with probability}
& \left(\frac{G}{G-G_h} N_t + \frac{GG_h}{G_h-G}t + c \right)/N_t, \\
N_t + G_h & \mbox{with probability} & 1-\left(\frac{G}{G-G_h} N_t + \frac{GG_h}{G_h-G}t + c \right)/N_t.
\end{array}\right.
\end{equation*}

\quad First assume that $p > p_h$ (so $G > G_h$). 
Let $L_t: = N_t - G_h t$, and the dynamics of $L_t$ is:
\begin{equation*}
L_{t+1} = \left\{ \begin{array}{lcl}
L_t + (G-G_h) & \mbox{with probability}
& \left(\frac{G}{G-G_h} L_t  + c \right)/(L_t + G_ht), \\
L_t  & \mbox{with probability} & 1-\left(\frac{G}{G-G_h} L_t  + c \right)/(L_t + G_h t).
\end{array}\right.
\end{equation*}
Note that 
\begin{equation}
\label{eq:Ltupper}
\left(\frac{G}{G-G_h} L_t  + c \right)/(L_t + G_h t) \le 1 \Longrightarrow L_t \le (G-G_h)\left(t - \frac{c}{G_h} \right).
\end{equation}
 We have
 \begin{equation}
 \label{eq:Ltineq}
 \begin{aligned}
\mathbb{E}\left(L_{t+1} \,|\, \mathcal{F}_t \right)
& = L_t + \frac{G L_t + c(G-G_h)}{L_t + G_h t} \\
& \ge \left(1 + \frac{1}{t - \frac{G-G_h}{GG_h} c} \right) L_t + \frac{c(G-G_h)/G}{t - \frac{G-G_h}{GG_h} c}.
\end{aligned} 
 \end{equation}
By setting 
\begin{equation*}
I_t: = \frac{L_t + c(G-G_h)/G}{t - \frac{G-G_h}{GG_h} c},
 \end{equation*}
 the inequality \eqref{eq:Ltineq} yields
 $\mathbb{E}(I_{t+1} \,|\, \mathcal{F}_t) \ge I_t$.
 That is, $(I_t, \, t \ge 0)$ is a sub-martingale. 
By \eqref{eq:Ltupper}, we have $\limsup_{t \to \infty} \mathbb{E}(I_t) \le G-G_h$.
The martingale convergence theorem (see \cite[Theorem 4.2.11]{Durrett}) implies that as $t \to \infty$,
\begin{equation*}
I_t \to I_\infty \mbox{ a.s. and in } L^1, \quad \mbox{with } 0 < I_\infty \le G-G_h \mbox{ a.s.}
\end{equation*}
So $L_t/t \to I_\infty$ a.s. and in $L^1$. 
Again by \eqref{eq:Ltineq}, we get
$\lim_{t \to \infty} \mathbb{E}L_{t+1} - \mathbb{E}L_t = \mathbb{E}\left( \frac{G I_\infty}{I_\infty + G_h}\right)$,
which yields 
\begin{equation*}
\mathbb{E}I_\infty = \mathbb{E}\left( \frac{G I_\infty}{I_\infty + G_h}\right).
\end{equation*}
We then have $\mathbb{E}\left(\frac{(G-G_h - I_\infty) I_\infty}{I_\infty + G_h}\right) = 0$,
so $I_\infty = G-G_h$ a.s. 
and $\mathbb{E}L_t/t = G-G_h$.
Therefore, $\mathbb{E} N_t/t \to G$ as $t \to \infty$,
and by \eqref{eq:M2}, 
we get $\mathcal{U}_m(p) =  \frac{G}{G-G_h} G + \frac{GG_h}{G_h-G} = G$ if $p > p_h$. 

\quad Similarly, if $p < p_h$, we can show that $\mathbb{E}N_t/t \to G_h$ as $t \to \infty$,
and hence, $\mathcal{U}_m(p) = \frac{G}{G-G_h} G_h + \frac{GG_h}{G_h-G} = 0$.
\end{proof}

\quad Let's make several remarks. 
First, the formula in \eqref{eq:Umformula} shows that 
the miner's utility $\mathcal{U}_m(\cdot)$ is discontinuous
at the true value $p = p_h$. 
If the miner is strategic so as to be overpaid with 
$p= p_h + \varepsilon$, then
\begin{equation*}
\mathcal{U}_m(p) - \mathcal{U}_m(p_h)=(p+R)-\pi_0 (p_h + R)
= \varepsilon + (1- \pi_0)(p_h + R),
\end{equation*}
i.e., the miner's utility is overshot by a fixed amount $(1 - \pi_0)(p_h + R)$
plus the $\varepsilon$ increment (above $p$). 
Similarly, if $p=p - \varepsilon$, then
\begin{equation*}
\mathcal{U}_m(p) - \mathcal{U}_m(p_h)=0-\pi_0 (p_h + R)
= - \pi_0(p_h + R),
\end{equation*}
i.e., the miner's utility is undershot by a fixed amount $\pi_0(p_h + R)$.
As will be clear in the following section,  
these ``gaps''  (the overshot and the undershot) 
will become a key obstacle to 
apply the (raw) burning second-price auction of \cite{CS23}, in particular when there's a miner-user collusion.

\quad Second, the assumption $0 < \pi_0 < 1$ excludes two extremal cases:
$\pi_0 = 0$ and $\pi_0 = 1$.
If $\pi_0 = 0$, the miner has no stake at hand, hence is effectively a user. 
If $\pi_0 = 1$, then the miner controls the entirely blockchain, an extreme case that rarely happens.
In that case, the miner's utility is always
$\mathcal{U}_m(p) = p+R$, whatever the value of $p$ is, simply because there's no other 
miner as a competitor.

\quad Finally, 
the assumption that $R_t$ is constant may be relaxed. 
If $p = p_h$, the same argument in the proof shows that 
\begin{equation*}
\mathcal{U}_M(p_h) = \pi_0 \left(p_h +  \liminf_{t \to \infty} \frac{\sum_{k = 1}^t R_k}{t} \right).
\end{equation*}
However determining $\mathcal{U}_M(p)$ for $p \ne p_h$ with a time-dependent reward $R_t$ seems to be involved. 

\section{Burning second-price auction}
\label{sc33}

\quad 
Here we start with defining the three desired properties of a transaction fee mechanism.
\begin{definition}
\label{def:UMs}
Let $c\ge 1$ be a positive integer parameter. 
A transaction fee mechanism is said to be: 
\begin{itemize}[itemsep = 3 pt]
\item
user incentive compatible (UIC),
if a user's utility is maximized when biding honestly (i.e., with $b=v$), independent of 
how the other users bid;
\item
miner incentive compatible (MIC), 
if the miner's utility is maximized while honestly implementing the inclusion rule, 
given any bids from the users;
\item
$c$-side contract proofness ($c$-SCP),
if the joint utility of the miner and one or up to $c$ (colluding) users 
is maximized when the miner follows the inclusion rule honestly and the colluding users bid honestly, independent of 
how the other users bid.
\end{itemize}
\end{definition}
While Definition \ref{def:UMs} certainly applies to general utility functions, 
here we focus on the strict $\gamma$-utility in Definition \ref{def:gamut}, with a given value of $\gamma \in (0,1]$. 

\quad Next, we introduce a one-parameter family of the burning second-price auctions,
which naturally extends the burning second-price auction in \cite{CS23}.
\begin{definition}[Burning second-price auction]
\label{def:BSP}
Let $B$ be the block size, and $c\ge 1$ be the (maximum) colluding size.  
Set $\frac{2c}{2c+1} B \le k < B$.
Define the burning second-price auction with parameter $\theta \in (0,1]$ 
by the following set of rules:
\begin{itemize}[itemsep = 3 pt]
\item
Inclusion rule: Choose the $B$ highest bids to include in the block, breaking ties arbitrarily.
Let $(b_1, . . . , b_B)$ denote the included bids ranked in decreasing order. 
\item
Confirmation rule: Randomly select a subset $S \subset \{b_1, . . . , b_k\}$ of size $\left \lfloor{\frac{\theta k}{c}}\right \rfloor$.
All bid in the set $S$ are confirmed, and all other bids $\{b_1, \ldots, b_B\} \setminus S$ are unconfirmed. 
\item
Payment rule: Each confirmed bid pays $p=b_{k+1}$; unconfirmed bids pay nothing.
\item
Miner's revenue: The miner is paid $\theta (b_{k+1} + \cdots +b_{B})$ if $\left \lfloor{\frac{\theta k}{c}}\right \rfloor \ge 1$,
and $0$ otherwise.
Burn the remaining payment collected from the confirmed bids.
\end{itemize}
Let $\texttt{BSP}(\theta)$ denote this transaction fee mechanism. 
\end{definition}

\quad 
First note that a random subset of the included bids are confirmed. 
This can be implemented by trusted on-chain algorithms, e.g. multi-party computation (see \cite{SCW22}).
Second, {if the size of this subset $\left \lfloor{\frac{\theta k}{c}}\right \rfloor = 0$, 
then no transaction is confirmed and the miner is paid nothing (the trivial mechanism).
Hence, we will only need to consider the case $\left \lfloor{\frac{\theta k}{c}}\right \rfloor \ge 1$:}

\begin{equation}
\label{eq:burn}
\left \lfloor{\frac{\theta k}{c}}\right \rfloor b_{k+1} \ge
\frac{\theta k}{2c} b_{k+1} \ge \theta(b_{k+1} + \cdots + b_B),
\end{equation}
where the left side is the total payment collected from the users, and the right side is the miner's revenue. 
{The second inequality in \eqref{eq:burn} follows from the fact that
$b_{k+1} \ge \cdots \ge b_B$ and $k \ge \frac{2c}{2c+1} B$.}
Thus, $\texttt{BSP}(\theta)$ is a valid transaction fee mechanism. 

\quad 
More importantly, whereas it might appear that when $\theta = \gamma$,  
$\texttt{BSP}(\theta)$ reduces to the 
$\texttt{BSP}(\gamma)$ mechanism in \cite{CS23},
there is a crucial difference between the two: While both mechanisms apply the strict $\gamma$-utility to the miner,
as well as to every user, 
the $\texttt{BSP}(\gamma)$ mechanism in \cite{CS23} further applies the same time scale to both miner and user.
In contrast, in the $\texttt{BSP}(\theta)$ mechanism here, a different time scale, {\it long-run average},
is applied to the miner's utility following \eqref{eq:Um},
and more explicitly in \eqref{eq:Umformula}. 
Consequently, we need an extra parameter $\theta$ so as to overcome the discontinuity of the miner's utility 
highlighted at the end of the last section. 
To appreciate this, first note a negative (impossibility) result in the following proposition.

\begin{proposition}
\label{prop:cscp}
$c$-SCP cannot be achieved by $\texttt{BSP}(\theta)$ for any $\theta \in (0,1]$
under the strict $\gamma$-utility for the PoS.
\end{proposition}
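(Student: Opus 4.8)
The plan is to exhibit, for every $\theta \in (0,1]$, a bid profile together with a miner--user collusion of size one whose joint deviation strictly increases the colluding parties' joint utility, thereby contradicting $c$-SCP. The whole argument is driven by the discontinuity of $\mathcal{U}_m(\cdot)$ established in Theorem \ref{thm:Ump}, recalling from Definition \ref{def:gamut} that the miner's utility is $\mathcal{U}_m(\mathcal{R}_\gamma)$, so that the strict $\gamma$-return $\mathcal{R}_\gamma$ plays the role of the payment ``$p$'' and the honest return that of ``$p_h$''. If a deviation moves $\mathcal{R}_\gamma$ from the honest value $p_h$ to $p_h + \eta$ for any $\eta > 0$, then the miner's utility jumps from $\pi_0(p_h + R)$ up to $(p_h + \eta) + R$, a gain of at least $(1-\pi_0)(p_h + R) > 0$ (using $\pi_0 < 1$, the standing assumption of the theorem). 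The strategy is to produce such an upward nudge of $\mathcal{R}_\gamma$ at a cost to the colluding user that is linear in the size of the deviation and hence negligible compared with the fixed jump.

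First I would fix a profile of honest, strictly positive bids (with more bids than slots, as assumed), so that the honest miner revenue, and therefore the honest return, equals $p_h = \theta(b_{k+1} + \cdots + b_B) > 0$ and $\mathcal{U}_m(p_h) = \pi_0(p_h + R)$. Since $k < B$, there is at least one included bid sitting in a position $j \in \{k+1, \ldots, B\}$; such a bid is included but, by the confirmation rule (which only draws from $\{b_1, \ldots, b_k\}$), is never confirmed. I take the colluding user to be the owner of the lowest included bid, at position $B$, and let the deviation be for this user to overbid by a small $\delta > 0$, replacing $b_B$ by $b_B + \delta$ with $\delta$ small enough to preserve the ranking and keep the bid included. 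The miner continues to follow the inclusion rule honestly.

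Next I would tally the two sides of the joint utility. On the honest side the joint utility is $\pi_0(p_h + R) + 0$, the second term being the colluding user's utility (unconfirmed, no overbid). After the deviation the revenue term $\theta(b_{k+1} + \cdots + b_B)$ increases by exactly $\theta\delta$, so the miner's return becomes $\mathcal{R}_\gamma = p_h + \theta\delta > p_h$ and, by the jump, the miner's utility equals $(p_h + \theta\delta) + R$; meanwhile the colluding user is unconfirmed and overbids by $\delta$, incurring the penalty $\gamma\delta$, for a utility of $-\gamma\delta$. Hence the joint utility changes by
\[
(1-\pi_0)(p_h + R) + (\theta - \gamma)\,\delta,
\]
which is strictly positive once $\delta$ is taken small enough, since the first term is a fixed positive constant while the second vanishes as $\delta \to 0$. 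This strictly beats honest play, so $c$-SCP fails for every $\theta \in (0,1]$ and every $c \ge 1$.

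The step I expect to require the most care is verifying that the revenue genuinely rises by $\theta\delta$ with no compensating loss inside $\mathcal{R}_\gamma$, and handling the degenerate case $B = k+1$, in which the manipulated bid $b_B = b_{k+1}$ is also the price. In that case the confirmed (non-colluding) users pay $\delta$ more each, but since these users lie outside the collusion their extra payment --- burned by the mechanism --- does not enter the collusion's joint utility, so the tally above is unaffected. It is also worth stressing that the construction only needs a colluding user to \emph{overbid} while the miner plays honestly, which is precisely why the obstruction lies with $c$-SCP (miner--user collusion) rather than with MIC.
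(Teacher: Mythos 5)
Your proof is correct and follows essentially the same route as the paper's: a single colluding user overbids by a vanishingly small amount while staying included but unconfirmed, the miner's revenue rises slightly, and the discontinuity of $\mathcal{U}_m$ from Theorem \ref{thm:Ump} produces a fixed upward jump of $(1-\pi_0)(p_h+R)$ that dominates the user's linear penalty, giving exactly the paper's net change $(1-\pi_0)(p_h+R)+(\theta-\gamma)\delta>0$ for small $\delta$. The only cosmetic difference is the choice of colluding user --- the paper takes position $k+1$ (with the increment kept below $v_k-v_{k+1}$), while you take position $B$, which sidesteps any change to the payment $b_{k+1}$ except in the case $B=k+1$, which you handle correctly.
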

\begin{proof}
Without loss of generality, assume $c = 1$. 
Let $(v_1, \ldots, v_B)$ be the honest bids ranked in decreasing order, with $v_k > v_{k+1}$.
Consider the case where the miner and user $k+1$ collude.
Note that user $k+1$'s 
utility is $0$ since $v_{k+1}$ is not confirmed.

\quad Now, suppose user $k+1$ increase the bid to $b_{k+1}:=v_{k+1} + \varepsilon$, with $\varepsilon < v_k - v_{k+1}$.
So $b_{k+1}$ is still unconfirmed.
Then,
user $k+1$'s utility decreases from $0$ to $- \gamma \varepsilon$.
The miner's utility increases from $\pi_0\left(\theta \sum_{\ell = k+1}^B {v}_\ell + R\right)$ to $\theta \left(\sum_{\ell = k+1}^B  {v}_\ell + \varepsilon\right) + R$. 
Thus, the change in the joint utility of the miner and user $k+1$ is
\begin{equation}
\label{eq:changes3}
(1- \pi_0)\left(\theta \sum_{\ell = k+1}^B v_\ell + R \right) + (\theta - \gamma) \varepsilon.
\end{equation}
By making $\varepsilon$ sufficiently small ($\varepsilon \downarrow 0$), 
the first term above will be the
dominant, and thus the change will be positive. 
That is, the miner can collude with user $k+1$ to get a higher joint utility, which
violates $1$-SCP. 
\end{proof}

\quad In the above proof, 
$c$-SCP fails because a small increase in a user's bid
can trigger an overshoot $(1- \pi_0)\left(\theta \sum_{\ell = k+1}^B {v}_\ell + R \right)$
in the miner's utility.
One way to prevent this from happening  
is to put some constraints on $\varepsilon$ (the bid increment), 
and also on the bids $(v_1, \ldots, v_B)$; specifically, a lower bound on $\varepsilon$,
and an upper bound on $\sum_{\ell = k+1}^B v_\ell$. 
Yet even more importantly, we need to impose an upper limit on $\theta$, so as to rule out $\theta - \gamma\ge 0$. 
Because if $\theta \ge \gamma$, then the change to the joint miner-user utility in \eqref{eq:changes3} will always be positive; 
i.e., $c$-SCP will always fail, even with the constraints on bids and bid increments.



\section{Burning second-price auction with minimum tick}
\label{sc4}

\quad Based on the observations following the proof of Proposition  \ref{prop:cscp}, 
we make the following assumption.
\begin{assump}
\label{assump:epsp}
~
\begin{enumerate}[itemsep = 3 pt]
\item[(i)]
The bids are in multiples of $\Delta > 0$, i.e. taking values from the discrete set $\in \{0, \Delta, 2 \Delta, \ldots\}$.
\item[(ii)]
Let $(v_1, \ldots, v_B)$ be honest bids ranked in decreasing order.  
Then, there exists a constant $\kappa > 0$ such that $\sum_{\ell = k+1}^B v_\ell \le \kappa$.
\end{enumerate}
\end{assump}

\quad The assumption in (i) imposes a (minimum) bid increment $\Delta > 0$, or ``tick'' size, 
which is common in practice (e.g., auction, stock markets, etc).
It is a block validity rule, so is part of the inclusion rule of the transaction fee mechanism.
The assumption (ii) gives an upper bound for the miner's revenue 
$\theta \sum_{\ell = k+1}^B v_\ell$ if all participants are honest.
A sufficient condition is that
\begin{equation*}
\sum_{\ell = 1}^B v_\ell \le \kappa,
\end{equation*}
i.e. the total amount of top (honest) bids is bounded.
This is related to the idea of {\em bounded rationality} \cite{Simon90}, 
where the users bid rationally.
So the bid flow is expected to stabilize, or close to stationarity. 
In practice, the upper bound $\kappa$ can be inferred or estimated from 
the bid-stream data.

\quad The next theorem confirms that under Assumption \ref{assump:epsp},
$\texttt{BSP}(\theta)$ satisfies UIC, MIC and $c$-SCP
for a suitable choice of $\theta$.
\begin{theorem}
\label{thm:main}
Let Assumption \ref{assump:epsp} hold. Then,
for $0 < \pi_0 < 1$, $\Delta > (1 - \pi_0) R/\gamma$, 
and
\begin{equation}
\label{eq:thetarange}
\theta \le \overline{\theta}: = \min \left( \frac{\pi_0 R}{(1 - \pi_0) \kappa}, \,  
\frac{\gamma \Delta - (1 - \pi_0) R} {(1- \pi_0) \kappa + \Delta} \right),  
\end{equation}
$\texttt{BSP}(\theta)$ achieves UIC, MIC and $c$-SCP under the strict $\gamma$-utility for the PoS.
\end{theorem}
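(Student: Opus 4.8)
The plan is to verify the three properties of Definition~\ref{def:UMs} in turn, treating UIC and MIC as warm-ups and reserving the bulk of the effort for $c$-SCP, where the discontinuity of $\mathcal{U}_m(\cdot)$ in \eqref{eq:Umformula} is exactly what forces the constraints in \eqref{eq:thetarange}. Throughout I write $p_h=\theta\sum_{\ell=k+1}^B v_\ell$ for the honest miner revenue (the benchmark in Theorem~\ref{thm:Ump}), so that an honest miner earns utility $\pi_0(p_h+R)$, any strategy raising the realized revenue to $\rho>p_h$ yields the jumped-up utility $\rho+R$, and any strategy lowering it to $\rho<p_h$ collapses the utility to $0$. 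The single most useful structural fact, which I extract first, is that the randomized confirmation rule of Definition~\ref{def:BSP} throttles the expected number of confirmed colluding bids: since at most $c$ colluding bids lie in the top $k$ and exactly $\lfloor\theta k/c\rfloor$ of those $k$ are confirmed uniformly at random, the expected number of confirmed colluders is at most $c\cdot\lfloor\theta k/c\rfloor/k\le\theta$. This ``$\le\theta$'' budget is precisely what pays for the first term of $\overline{\theta}$.

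For UIC I would fix the honest miner and all other bids and analyze a single user of value $v$ as a function of its bid $b$. The payment of any confirmed bid is $b_{k+1}$, which depends on the user's own bid only through boundary effects, so this is the usual second-price case split: overbidding ($b>v$) can only help by pushing the bid into the top $k$, but crossing that boundary lifts the realized price to the displaced neighbor, which is $\ge v$, so confirmation becomes unprofitable, while the penalty $-\gamma(b-v)$ in \eqref{eq:gammauser} taxes the event that the overbid stays unconfirmed; underbidding ($b<v$) can only forfeit a profitable confirmation. Hence $b=v$ is optimal with no condition on $\theta$ or $\Delta$. For MIC I would use Theorem~\ref{thm:Ump} to reduce the claim to showing that honest inclusion maximizes the return $\mathcal{R}_\gamma$, so the revenue can never be pushed above $p_h$: reshuffling the included set without fakes only replaces higher bids among $b_{k+1},\dots,b_B$ by lower ones and so weakly lowers the revenue, whereas injecting a fake $f$ (true value $0$) raises the revenue by at most $\theta f$ but, being an overbid, is taxed at rate $\gamma$ (or pays the price $b_{k+1}$ when confirmed). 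Since $\overline{\theta}<\gamma$ (because $\gamma\Delta-(1-\pi_0)R<\gamma\bigl((1-\pi_0)\kappa+\Delta\bigr)$), we have $\theta<\gamma$, so every fake strictly decreases $\mathcal{R}_\gamma$; thus no deviation attains $\rho>p_h$ and honest play is optimal.

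The core is $c$-SCP, which I would organize by the regime of $\rho$ relative to $p_h$. In the crash regime $\rho<p_h$ the miner's utility falls from $\pi_0(p_h+R)$ to $0$, a loss of $\pi_0(p_h+R)$, and the only compensating gain is the drop of the payment $b_{k+1}$ enjoyed by the confirmed colluders; by the budget above the expected joint saving is at most $\theta$ times the price reduction, hence at most $\theta\,v_{k+1}\le\theta\sum_{\ell=k+1}^B v_\ell=p_h$. The joint change is therefore at most $-\pi_0(p_h+R)+p_h=(1-\pi_0)p_h-\pi_0 R$, which is $\le0$ once $(1-\pi_0)p_h\le\pi_0R$; since $p_h\le\theta\kappa$ by Assumption~\ref{assump:epsp}(ii) it suffices that $(1-\pi_0)\theta\kappa\le\pi_0R$, the first term of \eqref{eq:thetarange}. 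In the overshoot regime $\rho>p_h$ the miner captures the jump $(1-\pi_0)(p_h+R)+(\rho-p_h)$, but the revenue can only be raised by lifting an unconfirmed bid above its value or by injecting a fake, each taxed at rate $\gamma$ by \eqref{eq:gammauser}; the cheapest such move is the single-tick increment of Proposition~\ref{prop:cscp}, and reusing \eqref{eq:changes3} with $\varepsilon\ge\Delta$ and $\sum_{\ell=k+1}^B v_\ell\le\kappa$ bounds the joint change by $(1-\pi_0)(\theta\kappa+R)+(\theta-\gamma)\Delta$, which is $\le0$ exactly under the second term of \eqref{eq:thetarange}. The balanced regime $\rho=p_h$ leaves the miner's utility untouched and collapses to the single-user UIC argument applied to the coalition.

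The main obstacle is making this $c$-SCP analysis exhaustive rather than illustrative: I must verify that the two extremal deviations above---a minimal-tick overbid at the confirmation boundary and a price-depressing underbid that crashes the revenue---really dominate every joint strategy of a coalition of size $\le c$, including simultaneous over- and under-bidding across several colluders, fake-transaction injection by the miner, and reshufflings of the inclusion list that move bids across the $k/(k{+}1)$ boundary and thereby couple the confirmation probabilities, the price $b_{k+1}$, and the revenue all at once. The delicate point is that every revenue-raising move pays the $\gamma$-tax (controlled by $\theta<\gamma$ together with the tick $\Delta$) while every revenue-lowering move forfeits $\pi_0(p_h+R)$ against a saving throttled to $\theta$ by the random confirmation; showing that these two opposing mechanisms are individually and jointly dominated by \eqref{eq:thetarange}, uniformly over all coalitions, is where the requirement $\Delta>(1-\pi_0)R/\gamma$ (which keeps $\overline{\theta}>0$) and the min-of-two-bounds structure of $\overline{\theta}$ are both used in full.
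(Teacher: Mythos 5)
Your skeleton coincides with the paper's: the proof there is also split into three lemmas (UIC holds for any $\theta>0$, MIC for any $\theta\le\gamma$, and $c$-SCP under the full hypotheses), and your two regime inequalities are exactly the paper's key bounds --- the undershoot loss $\pi_0\bigl(\theta\sum_{\ell=k+1}^B v_\ell+R\bigr)$ against a colluder gain of at most $\theta v_{k+1}\le\theta\kappa$, offset by $\theta\le\frac{\pi_0 R}{(1-\pi_0)\kappa}$ (the paper's \eqref{eq:joint1}), and the overshoot gain $(1-\pi_0)\bigl(\theta\kappa+R\bigr)$ against a $\gamma$-tax of at least $\gamma\Delta$, offset by $\theta\le\frac{\gamma\Delta-(1-\pi_0)R}{(1-\pi_0)\kappa+\Delta}$ (the paper's \eqref{eq:joint2}). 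Your UIC and MIC sketches, and the observation $\overline{\theta}<\gamma$, also track the paper's Lemmas \ref{lem:UIC} and \ref{lem:MIC}.

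The gap is the one you name yourself in your final paragraph, and it is genuine rather than cosmetic: the claim that the two extremal deviations dominate \emph{every} strategy of a miner-plus-$\le c$-users coalition \emph{is} the content of $c$-SCP, not a verification that can be deferred. The paper closes it in the proof of Lemma \ref{lem:cSCP} by a canonical sequential decomposition: any coalition strategy is rewritten as atomic moves --- Step 1, the miner deletes real bids one at a time; Step 2, the miner swaps real bids for fakes; Step 3, colluding users change bids one at a time, in ascending order of true value --- and each atomic move is analyzed for its effect on the revenue and on each user's expected utility, after which multiple moves are consolidated by the sign of the cumulative revenue change (cases (3a)/(3b)). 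Two of your asserted per-regime bounds only become theorems through this bookkeeping. First, the crash-regime bound ``colluders gain at most $\theta$ times the price reduction'' is proved by a telescoping argument: each deletion raises any single user's expected utility by at most $\frac{\theta}{c}(v_{k+1}-v_{k+2})$ (which also covers a previously unconfirmed colluder sliding into the top $k$, since its value $v_{k+1}$ minus the new price $v_{k+2}$ is the same increment), accumulating to $\frac{\theta}{c}v_{k+1}$ per user; without the one-move-at-a-time analysis this is an assumption, not a bound. Second, the overshoot-regime claim that every unit of revenue raised (at rate $\theta$) costs the coalition at least $\gamma$ must survive \emph{mixed} deviations, e.g.\ one colluder underbidding to depress the price while another overbids to restore the revenue, or fakes combined with inclusion-list edits that move bids across the $k/(k{+}1)$ boundary; the paper's cases (1a), (1b) over/underbid, (2a), (2b) and the consolidation in (3a)/(3b) are precisely the check that such combinations cannot cancel each other's taxes. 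So your proposal correctly explains why the two terms of $\overline{\theta}$ in \eqref{eq:thetarange} arise, but it does not yet prove that they suffice.
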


\quad Several remarks are in order before we prove Theorem \ref{thm:main}.
First, observe the dependence of $\overline{\theta}$ on the parameters
$(\pi_0, R, \Delta, \kappa)$ as specified below, and 
also refer to Figure \ref{fig2} for illustration: 
\begin{itemize}[itemsep = 3 pt]
\item
$\overline{\theta}$ is increasing in $\Delta$,
and $\overline{\theta} \to 0$ when $\Delta \to (1 - \pi_0) R/\gamma$,
and $\overline{\theta} \to \min \left(\frac{\pi_0 R}{(1 - \pi_0) \kappa}, \gamma \right)$ 
when $\Delta \to \infty$.
\item
$\overline{\theta}$ is decreasing in $\kappa$,
and $\overline{\theta} \to \gamma - (1 - \pi_0) R/\Delta$ when $\kappa \to 0$,
and $\overline{\theta} \to 0$ when $\kappa \to \infty$.
\item
$\overline{\theta}$ is increasing in $\pi_0$,
and $\overline{\theta} \to 0$ when $\pi_0 \to 0$,
and $\overline{\theta} \to \gamma$ when $\pi_0 \to  1$.
\item
$\overline{\theta}$ increases from $0$ to $\frac{\pi_0 \gamma \Delta}{(1 - \pi_0)\kappa + \pi_0  \Delta}$
when $R \in \left[0, \frac{(1 - \pi_0) \kappa\gamma  \Delta}{(1 - \pi_0)\kappa + \pi_0 \Delta}\right]$;
and then decreases 
to $0$
when $R \in \left(\frac{(1 - \pi_0) \kappa\gamma \Delta}{(1 - \pi_0)\kappa + \pi_0 \Delta}, 
\frac{\gamma \Delta}{1 - \pi_0}\right).$
\end{itemize}

Second, as we will see in the proof below, 
$\texttt{BSP}(\theta)$ satisfies UIC for any $\theta$,
and satisfies MIC for any $\theta \le \gamma$. 
(Note that $\theta\le \overline{\theta}\le \gamma$ as observed above.)
The condition in \eqref{eq:thetarange} is required for $c$-SCP,
where the bound $\frac{\pi_0 R}{(1 - \pi_0) \kappa}$
is to offset the undershoot gap in the miner's utility,
while the bound 
$\frac{\gamma \Delta - (1 - \pi_0)R}{(1 - \pi_0) \kappa + \Delta}$
is to offset its overshoot gap.

\begin{figure}[h]
     \begin{subfigure}[b]{0.45\textwidth}
     \centering
         \includegraphics[width=0.85\textwidth]{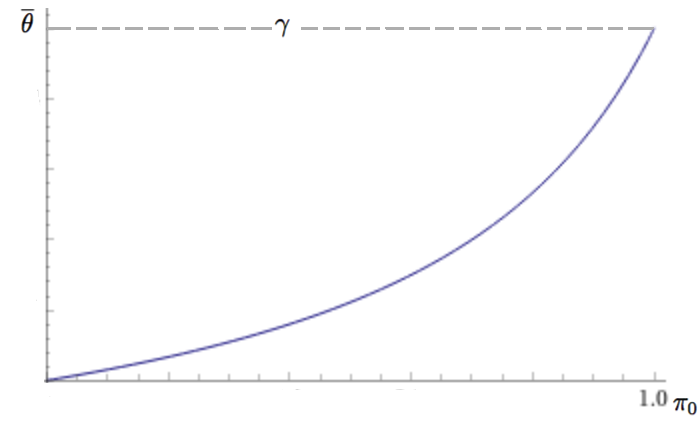}
         \caption{Plot of $\pi_0 \to \overline{\theta}$.}
     \end{subfigure}
     \hfill
     \begin{subfigure}[b]{0.45\textwidth}
     \centering
         \includegraphics[width=0.85\textwidth]{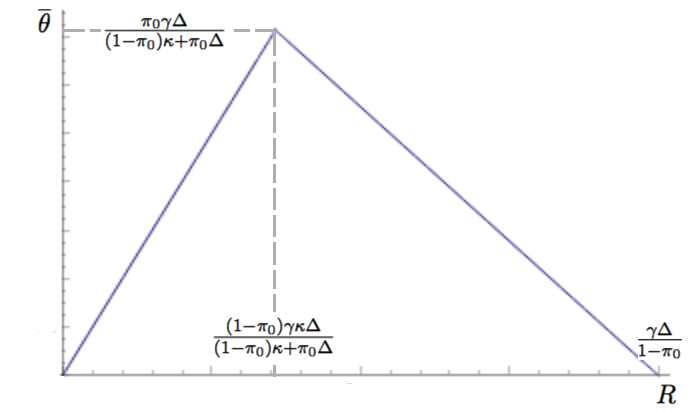}
         \caption{Plot of $R \to \overline{\theta}$.}
     \end{subfigure}
     \hfill
     \begin{subfigure}[b]{0.45\textwidth}
     \centering
         \includegraphics[width=0.85\textwidth]{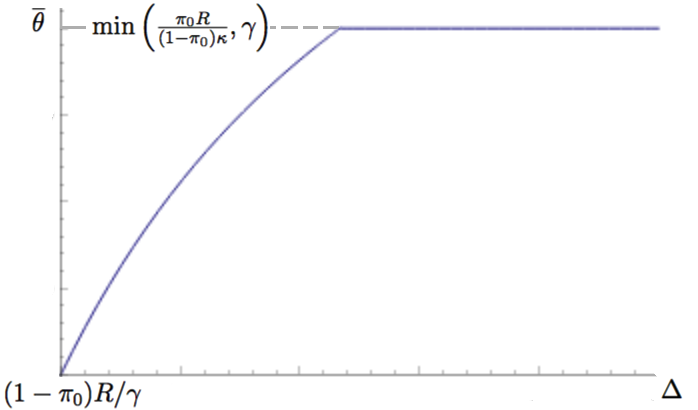}
         \caption{Plot of $\Delta \to \overline{\theta}$.}
     \end{subfigure}
     \hfill
     \begin{subfigure}[b]{0.45\textwidth}
     \centering
         \includegraphics[width=0.85\textwidth]{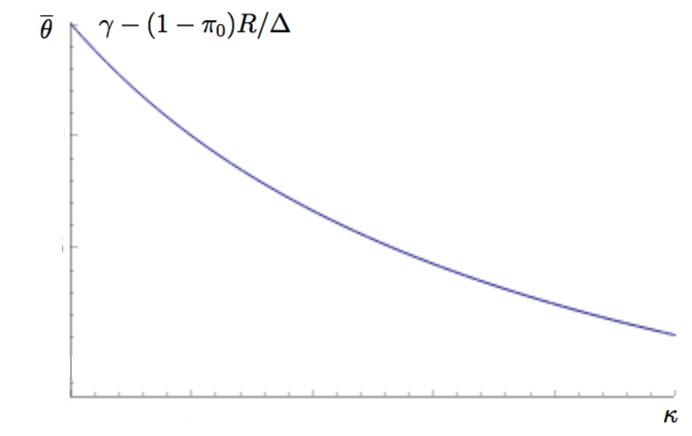}
         \caption{Plot of $\kappa \to \overline{\theta}$.}
     \end{subfigure}
        \caption{Plots of $\overline{\theta}(\pi_0, R, \Delta, \kappa)$.}
        \label{fig2}
\end{figure}

\quad Following Theorem \ref{thm:main}, we know that
$\texttt{BSP}(\theta)$ achieves UIC, MIC and $c$-SCP
as long as $\theta \le \overline{\theta}$. 
Recall $\texttt{BSP}(\theta)$ confirms a $\theta$-portion of the top $k$ bids; hence, 
 it is desirable to take $\theta = \overline{\theta}$ so as to attain the best efficiency.  
After all, the blockchain is used to confirm transactions,
so the more it confirms in each block, the more efficient the mechanism is.

\quad As indicated in Section \ref{sc31}, 
Theorem \ref{thm:main} can be readily extended to the miner-miner collusion.
To do so, it suffices to replace $\pi_0$ with $\pi^{\tiny \mbox{gr}}_0$ in \eqref{eq:thetarange}, 
where $\pi^{\tiny \mbox{gr}}_0$ is the total share of the miner collusion group.
In this case it is possible, albeit highly unlikely, that $\pi_0^{\tiny \mbox{gr}} = 1$
(i.e., a group of miners control the entire blockchain).
Then $\mathcal{U}_m(p) = p+R$ for all $p$;
refer to the earlier remarks after Theorem \ref{thm:Ump}.
In this case, the condition in \eqref{eq:thetarange} reduces to
$\theta \le \overline{\theta} = {\gamma}$, 
{and the raw $\texttt{BSP}(\gamma)$ achieves all the desired properties.} 
 

\medskip

\quad We are now ready to prove Theorem \ref{thm:main}. We split 
the proof into three separate lemmas, each proving one of the three properties UIC, MIC and $c$-SCP.
This split allows us to bring out the specific conditions required for each property;
in particular, note that the range of $\theta$ in \eqref{eq:thetarange} can be relaxed (extended) for UIC and MIC,
and Assumption \ref{assump:epsp} is only needed for establishing $c$-SCP.

\begin{lemma}[UIC]
\label{lem:UIC}
For any $\theta > 0$, $\texttt{BSP}(\theta)$ achieves UIC.
\end{lemma}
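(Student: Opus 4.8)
The plan is to fix an arbitrary user with true value $v$, hold all the other submitted bids fixed, and show that the bid $b=v$ maximizes this user's expected utility \eqref{eq:gammauser}, no matter what the others bid. The engine of the argument is the second-price structure adapted to the randomized confirmation rule. Let $t$ denote the $k$-th largest among the \emph{other} bids. Whenever the user's bid satisfies $b>t$ it occupies one of the top $k$ slots, the $(k+1)$-st largest bid overall is exactly $t$, and therefore the payment $p=b_{k+1}=t$ does \emph{not} depend on the user's own bid; moreover, since the confirmed set $S$ is drawn uniformly, such a bid is confirmed with probability $q:=\lfloor \theta k/c\rfloor/k$, again independent of $b$. (If $\lfloor \theta k/c\rfloor=0$ the mechanism is trivial and honesty is immediate, so we may assume $q>0$.)

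First I would reduce the user's expected utility to an explicit piecewise function of the bid $b$. For $b>t$ the bid sits in the top $k$, so the user realizes $v-t$ upon confirmation and otherwise only pays the overbidding penalty, giving
\[
U(b)=q\,(v-t)-(1-q)\,\gamma\,(b-v)^+,
\]
where $x^+:=\max(x,0)$. For $b<t$ the bid is never confirmed (whether or not it is included in the block), so $U(b)=-\gamma\,(b-v)^+\le 0$. With these two expressions the remainder is a short case analysis on the sign of $v-t$.

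Next I would dispatch the two cases. If $v>t$, honest bidding lands the user in the top $k$ with utility $q\,(v-t)>0$; raising the bid above $v$ only activates the penalty, while lowering it below $t$ drops the utility to at most $0$, so $b=v$ is a maximizer. If $v\le t$, honest bidding leaves the user out of the top $k$ with utility $0$; the sole route into the top $k$ is to overbid to some $b>t\ge v$, which returns $q\,(v-t)-(1-q)\,\gamma\,(b-v)<0$, and every other deviation is non-positive, so $b=v$ is again optimal. The tie case $b=t$ is harmless: at $v=t$ both the confirmed and unconfirmed outcomes yield utility $0$.

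The step I expect to be the main obstacle is ruling out profitable overbidding when $v\le t$: a user may be tempted to ``gamble'' on the confirmation lottery by pushing the bid above the threshold $t$, and one must verify that the negative confirmation surplus $q\,(v-t)$, reinforced by the penalty, makes this strictly unprofitable --- this is precisely where the strict positivity of $\gamma$ and $q$ enters. A secondary point I would address briefly is that injecting fake transactions cannot help: a fake bid carries value $0$ and hence contributes $-q_f p_f-(1-q_f)\gamma\,b_f^+\le 0$, while enlarging the bid pool can only weakly raise the threshold $t$ and thus shrink the honest surplus $q\,(v-t)^+$. Consequently, submitting a single honest bid and no fake transactions is optimal, which establishes UIC for every $\theta>0$.
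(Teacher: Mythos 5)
Your proof is correct and follows essentially the same route as the paper's: both arguments fix the other bids, exploit the fact that the payment ($b_{k+1}$, i.e.\ your threshold $t$) and the uniform confirmation probability $\lfloor \theta k/c\rfloor / k$ are independent of the user's own bid, and then run a case analysis on whether the bid lands in or out of the top $k$ (your split on the sign of $v-t$ is just the paper's split on $i\le k$ versus $i>k$). Your explicit expected-utility function $U(b)$ and the closing remark on fake transactions are a slightly more formal packaging of the paper's conditional (confirmed/unconfirmed) comparison, but the underlying argument is identical.
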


\begin{proof}
Any user $i$ who submits an honest bid, i.e., $b_i=v_i$, 
obtains a utility of $v_i-{v_{k+1}} $, with {$p = v_{k+1}$} being the payment, if the bid is confirmed;
or a utility of $0$, if the bid is unconfirmed. 

\quad Now, suppose user $i$ changes the bid to $b_i\neq v_i$. 
{There are two cases:}

\smallskip\noindent
{Case 1: $i \le k$.
If $b_i$ is among the top $k$ bids, 
the utility remains unchanged at $v_i - v_{k+1}$ if the bid is confirmed;
the utility is $-\gamma (b_i-v_i)^+ \le 0$ if the bid is unconfirmed.
In both cases, the utility cannot exceed that of the honest bid.
If $b_i$ drops out of the top $k$ bids, 
the utility is $0$ -- no greater than the original utility.
}

\smallskip\noindent
{Case 2: $i > k$.
If $b_i$ is not among the top $k$ bids,
the bid is unconfirmed and the utility 
is $- \gamma(b_i - v_i)^+ \le 0$.
If $b_i$ joins the top $k$ bids, 
the payment changes to $p = v_k$.
The user's utility is $v_i - v_k \le 0$ if the bid is confirmed,
and is $-\gamma(b_i - v_i)^+ = -\gamma (b_i - v_i) \le 0$ 
if the bid is unconfirmed.}

 \quad The above also covers the case of injecting a fake transaction, i.e., one with $v_i=0$ and $b_i>0$. 
\end{proof}

\begin{lemma}[MIC]
\label{lem:MIC}
For any $\theta \le \gamma$, $\texttt{BSP}(\theta)$ achieves MIC.
\end{lemma}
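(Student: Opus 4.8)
The plan is to reduce MIC to a single statement about the miner's $\gamma$-return $\mathcal{R}_\gamma$, and then to verify that statement by a position-by-position accounting of every admissible deviation. First I would record that, by the formula \eqref{eq:Umformula} of Theorem \ref{thm:Ump}, an honest miner's revenue equals the reference payment $p_h=\theta\sum_{\ell=k+1}^{B}b_\ell$ obtained from the top $B$ submitted bids, so the honest utility is $\mathcal{U}_m(p_h)=\pi_0(p_h+R)$; whereas any deviation producing return $\mathcal{R}_\gamma$ earns $\mathcal{U}_m(\mathcal{R}_\gamma)$, which by \eqref{eq:Umformula} equals $\mathcal{R}_\gamma+R$ when $\mathcal{R}_\gamma>p_h$, equals $0$ when $\mathcal{R}_\gamma<p_h$, and equals $\pi_0(p_h+R)$ when $\mathcal{R}_\gamma=p_h$. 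Since $0<\pi_0<1$ forces $\pi_0(p_h+R)<p_h+R$, honesty is optimal if and only if no deviation can achieve $\mathcal{R}_\gamma>p_h$; equivalently, it suffices to show that for $\theta\le\gamma$ the honest inclusion maximizes $\mathcal{R}_\gamma$, with maximal value $p_h$.

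For this maximization I would treat the two admissible deviations, namely strategically re-forming the inclusion list and injecting fake transactions, and argue that neither raises $\mathcal{R}_\gamma$ above $p_h$. For a pure inclusion deviation (no fakes) the return is exactly the revenue $\theta\sum_{\ell=k+1}^{B}\tilde b_\ell$, i.e. $\theta$ times the sum of the smallest $B-k$ of the chosen $B$ bids; replacing any included bid by a higher excluded one weakly raises every order statistic and hence this bottom sum, so the top-$B$ choice is optimal and the revenue is at most $p_h$. For fake injection I would compute the marginal contribution of a single fake bid $b$ (true value $0$): a fake landing in the revenue zone (positions $k+1,\dots,B$) raises the revenue by at most $\theta b$ but, being unconfirmed, is penalized $\gamma b$, for a net change $(\theta-\gamma)b\le 0$; a fake landing in the confirmation zone (positions $1,\dots,k$) pushes one real bid of value $\le b$ into the revenue zone, raising the revenue by at most $\theta b$, but incurs an expected cost $\tfrac{s}{k}\tilde b_{k+1}+(1-\tfrac{s}{k})\gamma b$ with $s=\lfloor\theta k/c\rfloor$, which dominates once $\theta\le\gamma\le 1$. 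Assembling these, an optimal deviation carries no fakes and reduces to the inclusion case, so $\mathcal{R}_\gamma\le p_h$, closing the argument.

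The step I expect to be the main obstacle is the combined deviation in which several fakes are inserted into the confirmation zone simultaneously, because each insertion reshuffles the ranking, alters the second-price payment $p=\tilde b_{k+1}$, and changes the confirmation probabilities of the remaining bids, so the per-fake marginal bound must be applied along a carefully chosen sequence of single insertions rather than naively. Here the constraint $\tfrac{2c}{2c+1}B\le k<B$ from Definition \ref{def:BSP} becomes essential: it guarantees $k>B/2$, so the number of real bids that fakes can shift into the revenue zone is at most $B-k<k$, which forces the total manufactured revenue (weighted by $\theta$) to stay below the aggregate fake penalty (weighted by $\gamma\ge\theta$). I would therefore organize this part as one global inequality comparing $\theta\sum_{\ell=k+1}^{B}\tilde b_\ell$ with the summed fake costs, grouping fakes by their zone, so as to sidestep the bookkeeping errors that a bid-by-bid induction would invite from the shifting positions.
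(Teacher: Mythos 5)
Your overall strategy is the same as the paper's: use the explicit form of $\mathcal{U}_m$ in \eqref{eq:Umformula} (equivalently, its monotonicity in the miner's return) to reduce MIC to showing that no deviation pushes the $\gamma$-return strictly above the honest revenue, then verify this by per-deviation accounting --- inclusion deviations cannot raise the revenue (your order-statistics argument), and each injected fake loses at least as much as it gains once $\theta\le\gamma$. Your revenue-zone case (net change $(\theta-\gamma)b\le 0$) is exactly the paper's case (2b). However, your confirmation-zone inequality does not hold as stated: you bound the revenue gain by $\theta b$ ($b$ the fake's size) and claim the expected cost $\tfrac{s}{k}\tilde b_{k+1}+(1-\tfrac{s}{k})\gamma b$ dominates whenever $\theta\le\gamma\le 1$, but take $\theta=\gamma=1$, $s/k=1/2$, $\tilde b_{k+1}=0$, $b=1$: the cost is $1/2$ while your gain bound is $1$. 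The conclusion is still true, but only with the sharper bound the paper uses: the revenue gain is at most $\theta$ times the \emph{displaced} bid $b_k=\tilde b_{k+1}$, not $\theta b$, and then the cost is at least $\bigl(\tfrac{s}{k}+(1-\tfrac{s}{k})\gamma\bigr)\tilde b_{k+1}\ge\gamma\tilde b_{k+1}\ge\theta\tilde b_{k+1}$, which dominates. (This is the paper's computation $-\gamma f+\theta(b_k-b_i)<0$, using $b_k\le f$ and $\theta\le\gamma$.)

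The more serious flaw is your third paragraph. The constraint $\tfrac{2c}{2c+1}B\le k<B$ plays no role whatsoever in MIC --- in the paper it is used only to guarantee the burning inequality \eqref{eq:burn}, i.e., validity of the mechanism --- and your proposed rationale is a non sequitur: the count comparison $B-k<k$ says nothing about the value-weighted comparison between manufactured revenue and aggregate fake penalties, so the ``global inequality'' you sketch does not close the multi-fake case. What does close it is much simpler and is what the paper does implicitly: realize a multi-fake deviation as a sequence of single replacements of a (real or fake) bid by a fake, where each replacement is analyzed against the \emph{current} bid vector. Your per-fake marginal bounds (once repaired as above) show every such replacement strictly decreases the return, and nothing in that accounting depends on whether the surrounding bids are real or fake, so the bounds telescope. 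No zone counting, and no condition on $k$ beyond $k<B$, is needed.
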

\begin{proof}
The miner can deviate from being honest by 
either not including the highest $B$ bids,
or injecting fake bids. 
Recall, given the bid vector $(b_1, \ldots, b_B)$ in decreasing order,
the miner's utility
$\mathcal{U}_m(\cdot)$ is non-decreasing in the miner's receivable (i.e., revenue) $\theta \sum_{\ell = k+1}^B b_\ell$. 
Hence, it is impossible for the miner not to include the highest $B$ bids. 

\quad As to injecting fake bids, this amounts to the miner changing some bid $b_i$ to $f$ (a fake one). 
Again, as the miner's utility
$\mathcal{U}_m(\cdot)$ is non-decreasing in the miner's total return
(revenue minus payment/cost), it suffices to examine the changes to the miners total return. 

There are two cases:

\smallskip\noindent
Case 1: Suppose $f$ is confirmed (hence, $f$ must be among the top $k$ bids). 
\begin{itemize}
\item[(1a)] If $i\le k$, then $f$ replaces $b_i$ in the original set of $k$ bids.
Thus, the miner needs to pay $b_{k+1}$ for the confirmed (fake) bid, while the miner's own
revenue remains unchanged. So, the net change to the miner's total return is $-b_{k+1} <0$.
\item[(1b)] if $i > k$, then $f$ joins the set of top $k$ bids, replacing $b_k$, which in turn replaces $b_{k+1}$ as payment;
and this is also the miner's payment for the confirmed (fake) bid. Accordingly, the change to the miner's
revenue is $\theta (b_k-b_i)$. Thus, the net change is $-b_k +\theta (b_k -b_i) < 0$, as $\theta\le 1$.
\end{itemize}

\smallskip\noindent
Case 2: Suppose $f$ is not confirmed. The fake bid costs the miner zero payment,  but incurs a negative amount
 $-\gamma f <0$ (since the fake bid has zero value). In addition, there are changes to the miner's revenue:
\begin{itemize}
\item[(2a)] Suppose $f$ is among the top $k$ bids. If $b_i$ is also among the top $k$ bids, then, there's no change to the miner's revenue.
If $b_i$ is not among the top $k$ bids, i.e., $i>k$, then similar to (1b) above, the change to the miner's
revenue is $\theta (b_k-b_i)$. Thus the net change (to the miner's total return) 
is $-\gamma f  +\theta (b_k -b_i)= (\theta b_k-\gamma f) - \theta b_i < 0$, since $\theta\le \gamma$ and $b_k\le f$. 
\item[(2b)] Suppose $f$ is not among the top $k$ bids. 
 If $b_i$ is among the top $k$ bids, then replacing
$b_i$ by $f$ will result in moving $b_{k+1}$ into the top $k$ set.
 The change to the miner's revenue is $\theta (f-b_{k+1})$; hence, the net change is $-\gamma f  +\theta (f -b_{k+1}) < 0$,
since  $f \le b_{k+1}$. 
If $b_i$ is also not among the top $k$ bids, then replacing
$b_i$ by $f$, the change to the miner's revenue is $\theta (f-b_i)$; hence, the net change is 
$-\gamma f  +\theta (f -b_i)=-(\gamma-\theta) f  -\theta b_i < 0$
since $\theta\le \gamma$. 
\end{itemize}
Since the miner's total return will be (strictly) reduced in all cases, the miner will have no incentive to inject any fake bids.
\end{proof}

\begin{lemma}[$c$-SCP]
\label{lem:cSCP}
Under the assumptions in Theorem \ref{thm:main},
$\texttt{BSP}(\theta)$ achieves $c$-SCP. 
\end{lemma}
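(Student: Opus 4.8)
The plan is to fix the honest bids $(v_1,\dots,v_B)$ in decreasing order, write $p_h:=\theta\sum_{\ell=k+1}^B v_\ell$ for the honest miner revenue (so $p_h\le\theta\kappa$ by Assumption \ref{assump:epsp}(ii)), and compare, for an arbitrary joint deviation of the miner and a colluding set $\mathcal{C}$ with $|\mathcal{C}|\le c$, the resulting joint utility against the honest one. By Theorem \ref{thm:Ump} the miner's contribution is the three-valued step function of its realized return $\mathcal{R}$ (revenue net of any fake-bid costs) relative to $p_h$, so I would organize the argument around the three regimes $\mathcal{R}<p_h$, $\mathcal{R}=p_h$, $\mathcal{R}>p_h$. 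Before splitting, I would reuse the monotonicity already established in Lemma \ref{lem:MIC}: a miner-injected fake bid is strictly costly because $\theta\le\gamma$, so it can never be the cheapest way to move $\mathcal{R}$, and the binding deviations reduce to the colluding users' bid changes together with the miner's inclusion choice, acting on the system only through the tail sum $\sum_{\ell=k+1}^B b_\ell$ (the revenue) and the price $b_{k+1}$. Throughout I write $q:=\lfloor\theta k/c\rfloor/k$ for the common confirmation probability of a top-$k$ bid and record the key inequality $cq\le\theta$.

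In the \emph{undershoot} regime $\mathcal{R}<p_h$ the miner's utility collapses from $\pi_0(p_h+R)$ to $0$, a loss of $\pi_0(p_h+R)$. The colluding users can profit from a lowered revenue only through a reduced price: each colluding user gains at most its expected payment saving $q(v_{k+1}-b_{k+1})\le q\,v_{k+1}$, and a colluding user newly moved into the top $k$ has true value $\le v_{k+1}$ and hence expected utility at most $q\,v_{k+1}$ as well. Summing over the at most $c$ colluding users and using $cq\le\theta$ together with $v_{k+1}\le\sum_{\ell=k+1}^B v_\ell$, the total user gain is at most $\theta v_{k+1}\le p_h$. The joint change is therefore at most $p_h-\pi_0(p_h+R)=(1-\pi_0)p_h-\pi_0 R\le(1-\pi_0)\theta\kappa-\pi_0 R$, which is $\le 0$ precisely because $\theta\le\pi_0 R/[(1-\pi_0)\kappa]$; this is where the first (undershoot) bound of \eqref{eq:thetarange} enters.

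The \emph{overshoot} regime $\mathcal{R}>p_h$ is the crux, being exactly the attack that defeats the raw mechanism in Proposition \ref{prop:cscp}. Since the top $B$ honest bids already maximize the tail sum among all honest inclusion choices, the honest revenue $p_h$ is the largest revenue attainable without overbidding, so raising $\mathcal{R}$ above $p_h$ forces a strict overbid; by the tick-size Assumption \ref{assump:epsp}(i) the cheapest such move is a single colluding user overbidding an otherwise-unconfirmed tail bid by the minimum amount $\Delta$, raising the revenue by $\theta\Delta$ at penalty $\gamma\Delta$ (a miner fake bid is strictly worse, its penalty being proportional to its full value). Writing the revenue increment as $\theta x$ with $x\ge\Delta$, the miner's gain is $(1-\pi_0)(p_h+R)+\theta x$ while the users pay $\gamma x$, so the joint change is $(1-\pi_0)(p_h+R)-(\gamma-\theta)x$, decreasing in $x$ and hence maximized at $x=\Delta$; combining the overshoot with a simultaneous price reduction only introduces further overbid penalties that outweigh the extra savings (again since $\theta\le\gamma$), so it cannot help. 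Bounding $p_h\le\theta\kappa$, the joint change is at most $(1-\pi_0)(\theta\kappa+R)-(\gamma-\theta)\Delta$, which is $\le 0$ exactly when $\theta\le[\gamma\Delta-(1-\pi_0)R]/[(1-\pi_0)\kappa+\Delta]$, the second (overshoot) bound of \eqref{eq:thetarange}; positivity of this bound is what forces $\Delta>(1-\pi_0)R/\gamma$.

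Finally, in the \emph{neutral} regime $\mathcal{R}=p_h$ the miner's utility is unchanged, so it remains to rule out a user gain. Because the tail sum is pinned at $p_h$, the price can be lowered only by raising other tail bids, and the compensating overbid incurs a penalty $\gamma x\ge\theta x$ that is at least the aggregate payment saving (at most $cq\,x\le\theta x$); while moving a colluding tail user into the top $k$ at the unchanged price $v_{k+1}$ yields nonpositive utility. Hence no deviation raises the joint utility in any regime, the honest profile is optimal, and $c$-SCP follows under the stated range of $\theta$. I expect the overshoot case to be the main obstacle: it is where the discontinuous jump of Theorem \ref{thm:Ump} is exploited, and taming it is precisely the purpose of introducing both the tick size $\Delta$ and the cap on $\theta$.
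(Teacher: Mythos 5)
Your proof is correct and rests on the same two quantitative pillars as the paper's, but it is organized along a genuinely different decomposition. The paper splits the collusion's deviation by \emph{type} --- Step 1 (strategic inclusion/deletion), Step 2 (fake-bid injection), Step 3 (untruthful bids by colluders) --- and only at the very end (cases (3a)/(3b)) reconciles combined deviations according to whether the net revenue moved up or down. You instead make the \emph{regime} of the miner's realized return $\mathcal{R}$ relative to $p_h$ (undershoot, neutral, overshoot) the primary organizing principle, which directly mirrors the three branches of $\mathcal{U}_m$ in Theorem \ref{thm:Ump} and makes transparent why the cap \eqref{eq:thetarange} has exactly two terms: your undershoot inequality $(1-\pi_0)\theta\kappa-\pi_0 R\le 0$ is precisely the paper's \eqref{eq:joint1}, and your overshoot inequality $(1-\pi_0)(\theta\kappa+R)-(\gamma-\theta)\Delta\le 0$ is precisely the paper's \eqref{eq:joint2}. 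What the paper's type-based organization buys is exhaustiveness: each deviation primitive is checked one at a time (including the careful accounting in \eqref{user1a} showing that an overbid which enters the top $k$ costs the colluder at least $\gamma(v_k-v_i)$), so nothing has to ``factor through'' a summary statistic. What your regime-based organization buys is conceptual economy, but it leans on two reductions that you assert rather than verify: (i) that every deviation affects the collusion's joint utility only through the tail sum and the price $b_{k+1}$, together with the penalties incurred to move them; and (ii) that combining an overshoot with a simultaneous price reduction ``cannot help.'' Claim (ii) is the delicate one: it needs the observation --- implicit in your accounting, explicit in the paper's cases (1a)/(1b) --- that any increment $\theta x$ to the tail sum costs some colluding party at least $\gamma x$ no matter what else is changing, while price-reduction gains are capped by $c$ times the per-bid confirmation probability $\lfloor\theta k/c\rfloor/k$ times the price drop, hence by $\theta\le\gamma$ times a drop that must itself be financed by further revenue losses or overbids. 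To be fair, the paper's own cases (3a)/(3b) are equally terse on exactly this point, so your argument sits at the paper's level of rigor; the combined-deviation case is simply the one place where both proofs are doing real, largely unstated work.
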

\begin{proof}
A collusion between a miner and the users may  one of the three strategies:
the miner does not include the highest bids;
the miner injects fake bids;
some users bid untruthfully. 
Denote by $C$ the set of the colluding users, with cardinality  $|C| \le c$.
Equivalently, assume that the miner and the users collude in the following order:
\begin{itemize}
\item Step 1.
The miner includes bids strategically. 
This is the same as the miner deleting the (real) bids one by one, and then including the highest bids.
\item Step 2.
The miner replaces some real bids (not in the set $C$) with fake bids.
\item Step 3.
Some users in $C$ change their bids untruthfully.
\end{itemize}

\smallskip
Step 1. Let $(v_1, \ldots, v_B, \ldots)$ denote the honest bids ranked in decreasing order.
Suppose $v_i$ with $i > k+1$ is deleted. 
This will not affect either the miner's utility or any user's utility.

\quad 
Suppose $v_i$ with $i \le k+1$ is deleted. 
  Then, the miner's
revenue changes from $\theta \sum_{\ell = k+1}^B v_\ell$ to $\theta \sum_{\ell = k+2}^{B+1} {v}_\ell$.
This change will be a net decrease, in which case
the miner's utility drops to $0$ and stays at $0$;
 unless ${v}_{k+1}= {v}_{B+1}$, in which case there's no change to the miner's revenue. 

On the other hand, any user's utility can increase by at most $\frac{\theta}{c} (v_{k+1} - v_{k+2})$. 
If the miner deletes one more bid, any user's utility can increase by at most $\frac{\theta}{c} (v_{k+2} - v_{k+3})$,
accumulating to $\frac{\theta}{c} (v_{k+1} - v_{k+3})$, and so forth.
Thus, any user's utility can increase by at most $\frac{\theta}{c} v_{k+1}$. 
Since there are at most $c$ colluding users, 
the joint utility of the colluding users can increase by at most $\theta v_{k+1}$.
So the change in the joint utility of the miner and the colluding users is at most
\begin{equation}
\label{eq:joint1}
-  \pi_0 \left(\theta \sum_{\ell = k+1}^B v_\ell+ R\right) + \theta v_{k+1} 
\le (1 - \pi_0) \theta \kappa - \pi_0 R \le 0,
\end{equation}
since $v_{k+1} \le \sum_{\ell = k+1}^B v_\ell \le \kappa$ by Assumption \ref{assump:epsp},
and $\theta \le \frac{\pi_0 R}{(1 - \pi_0) \kappa}$ by \eqref{eq:thetarange}.


\medskip
Step 2.
From the proof of Lemma \ref{lem:MIC}, it is shown that replacing a (real) bid with a fake one cannot increase the miner's utility in any scenario. Furthermore, it is clear that no user's utility can increase if the fake bid $f$ is 
among the top $k$ bids, as this may increase the payment (from $v_{k+1}$ to $v_k$). 
Hence, we only need to consider case (2b) in the proof of Lemma \ref{lem:MIC}.
In that case, the net change to the miner's total return (revenue minus cost) is $<0$; hence
the miner's utility will drop to $0$ and stay at $0$.  
Moreover, following the argument in Step 1 above, 
a user's utility can increase by at most $\frac{\theta}{c}\left(v_{k+1} - \max\{f, v_{k+2}\} \right)$,
where $\max\{f, v_{k+2}\}$ represents the $(k+1)$-th bid (i.e. the largest unconfirmed bid) after the replacement.
If the miner replace one more bid by another fake $f'$, 
any user's utility can further increase by at most $\frac{\theta}{c}\left(\max\{f, v_{k+2}\} - \max\{f', \min(f, v_{k+2})\} \right)$,
where the $\max$ term is the $(k+1)$-th bid after replacement. 
Repeating this argument will lead to the same 
inequality in \eqref{eq:joint1}.

\medskip
Step 3. {Recall that} $(v_1, \ldots, v_B)$ are the honest bids ranked in decreasing order. 
Without loss of generality, assume that the colluding users change their bids in an ascending order; 
i.e. the one with the lower true value changes first. 
Suppose user $i$ replaces $v_i$ with $b_i \ne v_i$.

\smallskip
Case 1: $i > k$. 

\begin{itemize}
\item[(1a)] 
$b_i$ is among the top $k$ bids. In this case, we have $v_k\ge v_i$ and $b_i> v_k$; moreover, 
$v_k$ becomes the $(k+1)$-th bid (i.e. the largest unconfirmed bid). 
Then, user $i$'s utility changes from $0$ to $v_i-v_k \le 0$ if $b_i$ is confirmed, and to $-\gamma (b_i - v_i) <0$ if unconfirmed. 
Thus, user $i$'s utility change is 
{
\begin{equation}
\label{user1a}
\begin{aligned}
-\left(1 - \frac{\theta}{c} \right) \gamma (b_i - v_i) - \frac{\theta}{c} (v_k- v_i) 
&\le -\left(1 - \frac{\theta}{c} \right) \gamma  \left (v_k- v_i \right)  - \frac{\theta}{c} \gamma (v_k - v_i)  \\
&= -\gamma (v_k-v_i),
\end{aligned}
\end{equation}
}
with the inequality following from $b_i >v_k \ge v_i$ and {$\gamma \le 1$}.

\noindent
The miner's revenue changes from $\theta \sum_{\ell = k+1}^B v_\ell$ to $\theta \left(\sum_{\ell = k}^B v_\ell - v_i\right)$.
So, the change to the miner's utility is 
\begin{equation}
\label{miner1a}
\pi_0 \left( \theta \sum_{\ell = k+1}^B v_\ell+ R \right) \longrightarrow \theta \left(\sum_{\ell = k}^B v_\ell - v_i\right) + R.
\end{equation}
Thus, the change to the joint utility of the miner and the colluding user is at most
\begin{eqnarray}
\label{eq:joint2}
&&(1 - \pi_0)\left(\theta \sum_{\ell = k+1}^B e_\ell + R \right)- {(\gamma - \theta)} (v_k - v_i) \nonumber\\
&\le& (1 - \pi_0)(\theta \kappa + R) - {(\gamma - \theta)} \Delta \le 0,
\end{eqnarray}
where the first inequality follows from $\sum_{\ell = k+1}^B e_\ell \le \kappa$ and $v_k-v_i \ge \Delta$ by Assumption \ref{assump:epsp},
and the second inequality follows from $\theta \le \frac{\gamma \Delta - (1 - \pi_0)R}{(1 - \pi_0) \kappa + \Delta}$ by \eqref{eq:thetarange}.

\smallskip

\item[(1b)]
$b_i$ is not among the top $k$ (hence, unconfirmed).
The miner's revenue changes from $\theta \sum_{\ell = k+1}^B v_\ell$ to $\theta \left(\sum_{\ell = k+1}^B v_\ell + b_i - v_i \right)$.
For the users who are among the top $k$, 
let their payment be $v'_{k+1}$ (if confirmed).
There are two possibilities.
\begin{itemize}
\item
If $b_i > v_i$ (overbidding), then $v'_{k+1} \ge v_{k+1}$. 
Then, user $i$'s utility decreases from $0$ to $-\gamma(b_i - v_i)<0$, 
whereas all other users' utilities do not increase (as their payment my increase).
The miner's utility increases, and the change is the same as in \eqref{miner1a}. 
Thus, the change in the joint utility is at most 
$(1- \pi_0) \left( \theta \sum_{\ell = k+1}^B e_\ell + R\right) - ( \gamma -\theta) (b_i - v_i) \le 0$
for the same reason as in \eqref{eq:joint2}.
\item
If $b_i < v_i$ (underbidding), then $v'_{k+1} \le v_{k+1}$.
So user $i$'s utility is unchanged (at $0$);
the utility of any other user among the top $k$ increases by at most $\frac{\theta}{c} \left(v_{k+1} - v'_{k+1} \right)$;
all other users are unaffected. 
Thus, the joint utility of the colluding users can increase by at most $\theta (v_{k+1} - v'_{k+1}) \le \theta v_{k+1}$;
while the miner's utility drops to $0$.
This reduces to the scenario in \eqref{eq:joint1},
and hence the joint utility of the miner and the colluding users cannot increase.
\end{itemize}

\end{itemize}

\smallskip
Case 2: $i \le k$.

\begin{itemize}
\item[(2a)] $b_i$ remains among the top $k$.
So user $i$'s utility change is equal to $0$ if confirmed, and equal to 
$-\gamma(b_i - v_i)^+ < 0$ if unconfirmed. 
All other users' utilities and the miner's utility remain unchanged. 
So it is obvious that the joint utility of the miner and the colluding users cannot increase.

\item[(2b)] $b_i$ drops out of the top $k$, which means
$b_i \le v_{k+1}$.
Then, the miner's utility either remains unchanged (if $b_i = v_{k+1}$) or drops to $0$ (if $b_i < v_{k+1}$). 
%
The utility of the colluding user(s) can increase by at most 
$\theta(v_{k+1} - \max(b_i, v_{k+2})) \le \theta v_{k+1}$, following the same argument as in Step 2.
So, this leads again to the scenario in \eqref{eq:joint1}, 
and hence the joint utility of the miner and the colluding users cannot increase.

\end{itemize}

\medskip
\quad Finally, consider the case of further replacements  by the colluding users. 
\begin{itemize} 
\item[(3a)]
Suppose the miner's revenue increases 
from $\theta \sum_{\ell = k+1}^B v_\ell$ to 
$\theta \left(\sum_{\ell = k+1}^B v_\ell + \varepsilon^{\uparrow}\right)$, with $\varepsilon^\uparrow > 0$.
A closer inspection of the previous analysis on cases when the miner's revenue increases, case (1a) and case (1b) with overbidding,
implies that the joint utility of the colluding users will decrease by at least 
$\gamma \varepsilon^\uparrow$.
Thus, the change to the joint utility of the miner and the colluding users is at most
\begin{equation*}
(1 - \pi_0) \left( \theta \sum_{\ell = k+1}^B v_\ell + R \right) + {(\theta - \gamma)} \varepsilon^{\uparrow} \le 0, 
\end{equation*}
for the same reason as in \eqref{eq:joint2}, taking into account that $\varepsilon^\uparrow > \Delta$.
\item[(3b)]
Suppose the miner's revenue decreases from $\theta \sum_{\ell = k+1}^B {v}_\ell$ 
to $\theta \left(\sum_{\ell = k+1}^B v_\ell - \varepsilon^{\downarrow}\right)$, with $\varepsilon^{\downarrow} > 0$.
Then, the miner's utility drops 
 to $0$.
From the previous analysis on cases when the miner's revenue decreases, case (1b) with underbidding and case (2b),
we know the joint utility of the colluding users increases by at most $\theta \varepsilon^{\downarrow}$.
So the joint utility of the miner and the colluding users is at most
\begin{equation*}
- \pi_0 \left(\theta \sum_{\ell = k+1}^B v_\ell  + R \right) + \theta \varepsilon^{\downarrow} \le 0, 
\end{equation*}
for the same reason as in \eqref{eq:joint1}. (Clearly, $\varepsilon^{\downarrow} \le \sum_{\ell = k+1}^B v_\ell$.)
\end{itemize}

\end{proof}


\section{Conclusions}
\label{sc5}

\quad In this paper, we consider the transaction fee mechanism design for the PoS protocol.
Motivated by the miner's long-term objective, 
we propose a long-run utility for the miner, and demonstrate that it exhibits discontinuity.
While the raw burning second price-auction recently proposed in the TFM literature fails to satisfy $c$-SCP under this long-run utility, 
a one-parameter generalization along with a minimum tick size
achieves all three desired properties (UIC, MIC and $c$-SCP) for TFM.

\quad There are several directions to extend this work. 
An obvious one is to consider the TFM for the PoW protocol,
where the computational cost (and the R\&D cost) needs to be taken into account.
Second, one can also consider other types of strategic behaviors
such that splitting a bid into small ones.
This resorts to the study a combinatorial auction.
Finally, some recent work proposed the notion of maximal extractable value (MEV) 
in the context of Flash Boys \cite{DG20}, and the proposer-builder separation (PBS) solution \cite{BV21}.
It would be interesting to analyze the TFM in these contexts.

\bigskip
{\bf Acknowledgements:} 
We thank Elaine Shi for introducing and explaining to us the strict $\gamma$-utility.
We thank Aviv Yaish for pointing out the reference \cite{GY22}.
W.\ Tang gratefully acknowledges financial support through NSF grants DMS-2113779 and DMS-2206038,
and through a start-up grant at Columbia University.
David Yao's work is part of a Columbia-CityU/HK collaborative project that is supported by the InnoHK Initiative, The
Government of the HKSAR and the AIFT Lab.

\bibliographystyle{abbrv}
\bibliography{unique}

\begin{thebibliography}{10}

\bibitem{AL20}
M.~Akbarpour and S.~Li.
\newblock Credible auctions: a trilemma.
\newblock {\em Econometrica}, 88(2):425--467, 2020.

\bibitem{BEOS19}
S.~Basu, D.~Easley, M.~O'Hara, and E.~G. Sirer.
\newblock Towards a functional fee market for cryptocurrencies.
\newblock 2019.
\newblock arXiv:1901.06830.

\bibitem{BV21}
V.~Buterin.
\newblock Proposer/block builder separation-friendly fee market designs.
\newblock 2021.
\newblock Available at
  \url{https://ethresear.ch/t/proposer-block-builder-separation-friendly-fee-market-designs/9725}.

\bibitem{BCD19}
V.~Buterin, E.~Conner, R.~Dudley, M.~Slipper, I.~Norden, and A.~Bakhta.
\newblock {E}{I}{P}-1559: Fee market change for {E}{T}{H} 1.0 chain.
\newblock 2019.
\newblock Available at \url{https://eips.ethereum.org/EIPS/eip-1559}.

\bibitem{CSZZ22}
X.~Chen, D.~Simchi-Levi, Z.~Zhao, and Y.~Zhou.
\newblock Bayesian mechanism design for blockchain transaction fee allocation.
\newblock {\em arXiv e-prints}, 2022.
\newblock arXiv:2209.13099.

\bibitem{CTT20}
J.~Chod, N.~Trichakis, G.~Tsoukalas, H.~Aspegren, and M.~Weber.
\newblock On the financing benefits of supply chain transparency and blockchain
  adoption.
\newblock {\em Manag. Sci.}, 66(10):4378--4396, 2020.

\bibitem{CS23}
H.~Chung and E.~Shi.
\newblock Foundations of transaction fee mechanism design.
\newblock In {\em Proceedings of the 2023 {A}nnual {ACM}-{SIAM} {S}ymposium on
  {D}iscrete {A}lgorithms ({SODA})}, pages 3856--3899. SIAM, Philadelphia, PA,
  2023.

\bibitem{DG20}
P.~Daian, S.~Goldfeder, T.~Kell, Y.~Li, X.~Zhao, I.~Bentov, L.~Breidenbach, and
  A.~Juels.
\newblock Flash boys 2.0: Frontrunning in decentralized exchanges, miner
  extractable value, and consensus instability.
\newblock In {\em 2020 IEEE Symposium on Security and Privacy (SP)}, pages
  910--927, 2020.

\bibitem{Dow22}
M.~Dowling.
\newblock Is non-fungible token pricing driven by cryptocurrencies?
\newblock {\em Finance Res. Lett.}, 44:102097, 2022.

\bibitem{Durrett}
R.~Durrett.
\newblock {\em Probability: theory and examples}, volume~31 of {\em Cambridge
  Series in Statistical and Probabilistic Mathematics}.
\newblock Cambridge University Press, Cambridge, fourth edition, 2010.

\bibitem{FW20}
M.~V. Ferreira and S.~M. Weinberg.
\newblock Credible, truthful, and two-round (optimal) auctions via
  cryptographic commitments.
\newblock In {\em Proceedings of the 21st ACM Conference on Economics and
  Computation}, pages 683--712, 2020.

\bibitem{GY22}
Y.~Gafni and A.~Yaish.
\newblock Greedy transaction fee mechanisms for (non-) myopic miners.
\newblock 2022.
\newblock arXiv:2210.07793.

\bibitem{KKKT16}
A.~Kiayias, E.~Koutsoupias, M.~Kyropoulou, and Y.~Tselekounis.
\newblock Blockchain mining games.
\newblock In {\em Proceedings of the 2016 ACM Conference on Economics and
  Computation}, pages 365--382, 2016.

\bibitem{KN12}
S.~King and S.~Nadal.
\newblock Ppcoin: Peer-to-peer crypto-currency with proof-of-stake.
\newblock 2012.
\newblock Available at \url{https://decred.org/research/king2012.pdf}.

\bibitem{Kris02}
V.~Krishna.
\newblock {\em Auction theory}.
\newblock Academic Press, 2002.

\bibitem{LSZ22}
R.~Lavi, O.~Sattath, and A.~Zohar.
\newblock Redesigning bitcoin’s fee market.
\newblock {\em ACM Trans. Econ.}, 10(1):1--31, 2022.

\bibitem{MN23}
J.~Mccrank and N.~Nishant.
\newblock Fidelity readies new spot bitcoin {E}{T}{F} filing, report says.
\newblock 2023.
\newblock Avalialbe at \\
  \url{https://www.reuters.com/technology/fidelity-preparing-submit-spot-bitcoin-etf-filing-block-2023-06-27/}.

\bibitem{MC19}
T.~McGhin, K.-K.~R. Choo, C.~Z. Liu, and D.~He.
\newblock Blockchain in healthcare applications: Research challenges and
  opportunities.
\newblock {\em J. Netw. Comput. Appl.}, 135:62--75, 2019.

\bibitem{Myer81}
R.~B. Myerson.
\newblock Optimal auction design.
\newblock {\em Math. Oper. Res.}, 6(1):58--73, 1981.

\bibitem{Naka08}
S.~Nakamoto.
\newblock Bitcoin: A peer-to-peer electronic cash system.
\newblock {\em Decentralized Business Review}, page 21260, 2008.

\bibitem{Riley88}
J.~G. Riley.
\newblock Ex post information in auctions.
\newblock {\em Rev. Econom. Stud.}, 55(3):409--429, 1988.

\bibitem{RS21}
I.~Ro{\c{s}}u and F.~Saleh.
\newblock Evolution of shares in a proof-of-stake cryptocurrency.
\newblock {\em Manag. Sci.}, 67(2):661--672, 2021.

\bibitem{Rou20}
T.~Roughgarden.
\newblock Transaction fee mechanism design for the {E}thereum blockchain: An
  economic analysis of {E}{I}{P}-1559.
\newblock 2020.
\newblock arXiv:2012.00854.

\bibitem{SK19}
S.~Saberi, M.~Kouhizadeh, J.~Sarkis, and L.~Shen.
\newblock Blockchain technology and its relationships to sustainable supply
  chain management.
\newblock {\em Int. J. Prod. Res.}, 57(7):2117--2135, 2019.

\bibitem{SSZ17}
A.~Sapirshtein, Y.~Sompolinsky, and A.~Zohar.
\newblock Optimal selfish mining strategies in {B}itcoin.
\newblock In {\em Financial Cryptography and Data Security}, volume 9603 of
  {\em Lecture Notes in Comput. Sci.}, pages 515--532. Springer, Berlin, 2017.

\bibitem{SCW22}
E.~Shi, H.~Chung, and K.~Wu.
\newblock What can cryptography do for decentralized mechanism design.
\newblock 2022.
\newblock arXiv:2209.14462.

\bibitem{Simon90}
H.~A. Simon.
\newblock Bounded rationality.
\newblock In {\em Utility and Probability}, pages 15--18. 1990.

\bibitem{Tang22}
W.~Tang.
\newblock Stability of shares in the {P}roof of {S}take protocol --
  concentration and phase transitions.
\newblock 2022.
\newblock arXiv:2206.02227.

\bibitem{Tang23}
W.~Tang.
\newblock Trading and wealth evolution in the {P}roof of {S}take protocol.
\newblock 2023.
\newblock arXiv:2308.01803.

\bibitem{TY23}
W.~Tang and D.~D. Yao.
\newblock Trading under the proof-of-stake protocol--a continuous-time control
  approach.
\newblock {\em Math. Finance}, 2023.
\newblock arXiv:2207.12581. Available at
  \url{https://onlinelibrary.wiley.com/doi/10.1111/mafi.12403}.

\bibitem{TPE20}
S.~Tanwar, K.~Parekh, and R.~Evans.
\newblock Blockchain-based electronic healthcare record system for healthcare
  4.0 applications.
\newblock {\em J. Netw. Comput. Appl.}, 50:102407, 2020.

\bibitem{Vic61}
W.~Vickrey.
\newblock Counterspeculation, auctions, and competitive sealed tenders.
\newblock {\em J. Finance}, 16(1):8--37, 1961.

\bibitem{WL21}
Q.~Wang, R.~Li, Q.~Wang, and S.~Chen.
\newblock Non-fungible token ({N}{F}{T}): Overview, evaluation, opportunities
  and challenges.
\newblock 2021.
\newblock arXiv:2105.07447.

\bibitem{Wood14}
G.~Wood.
\newblock Ethereum: A secure decentralised generalised transaction ledger.
\newblock {\em Ethereum Project Yellow Paper}, 151:1--32, 2014.

\bibitem{WSC23}
K.~Wu, E.~Shi, and H.~Chung.
\newblock Maximizing miner revenue in transaction fee mechanism design.
\newblock 2023.
\newblock arXiv:2302.12895.

\end{thebibliography}
\end{document}